  \providecommand\BibTeX{{%
    \normalfont B\kern-0.5em{\scshape i\kern-0.25em b}\kern-0.8em\TeX}}}
\definecolor{Gray}{gray}{0.9}
\newcolumntype{g}{>{\columncolor{Gray}}c}
\newcolumntype{w}{>{\columncolor{white}}c}
\DeclareMathOperator{\sgn}{sgn}
\newcommand{\qq}[1]{\textcolor{black}{#1}}
\newtheorem{lemma}{Lemma}
\newcounter{noteQWctr} \setcounter{noteQWctr}{1}
\newcommand{\multiline}[1]{%
  \begin{tabularx}{\dimexpr\linewidth-\ALG@thistlm}[t]{@{}X@{}}
    #1
  \end{tabularx}
}
\begin{document}

\title{BATS: A Spectral Biclustering Approach to Single Document Topic Modeling and Segmentation}

\author{Qiong Wu}
\email{qwu05@email.wm.edu}
\affiliation{%
  \institution{Department of Computer Science, College of William and Mary}
  \streetaddress{200 Stadium Dr}
  \city{Williamsburg}
  \state{Virginia}
  \postcode{23185}
}

\author{Adam Hare}
\email{adam.hare@zoomi.ai}
\affiliation{%
  \institution{Zoomi Inc.}
  \streetaddress{325 Sentry Parkway, Suite 200}
  \city{Blue Bell}
  \state{Pennsylvania}
  \postcode{19422}
}

\author{Sirui Wang}
\email{swang23@email.wm.edu}
\affiliation{%
  \institution{Department of Computer Science, College of William and Mary}
  \streetaddress{200 Stadium Dr}
  \city{Williamsburg}
  \state{Virginia}
  \postcode{23185}
}

\author{Yuwei Tu}
\email{yuwei.tu@zoomi.ai}
\affiliation{%
  \institution{Zoomi Inc.}
  \streetaddress{325 Sentry Parkway, Suite 200}
  \city{Blue Bell}
  \state{Pennsylvania}
  \postcode{19422}
}

\author{Zhenming Liu}
\email{zliu20@wm.edu}
\affiliation{%
  \institution{Department of Computer Science, College of William and Mary}
  \streetaddress{200 Stadium Dr}
  \city{Williamsburg}
  \state{Virginia}
  \postcode{23185}
}

\author{Christopher G. Brinton}
\email{cgb@purdue.edu}
\affiliation{%
  \institution{School of Electrical and Computer Engineering, Purdue University}
  \streetaddress{610 Purdue Mall}
  \city{West Lafayette}
  \state{Indiana}
  \postcode{47907}
 }

\author{Yanhua Li}
\email{yli15@wpi.edu}
\affiliation{%
  \institution{Department of Computer Science, Worchester Polytechnic Institute}
  \streetaddress{100 Institute Rd}
  \city{Worcester}
  \state{Massachusetts}
  \postcode{01609}
 }


\begin{abstract}
Existing topic modeling and text segmentation methodologies generally require large datasets for training, limiting their capabilities when only small collections of text are available. In this work, we reexamine the inter-related problems of ``topic identification" and ``text segmentation" for sparse document learning, when there is a single new text of interest. In developing a methodology to handle single documents, we face two major challenges. First is \textit{sparse information}: with access to only one document, we cannot train traditional topic models or deep learning algorithms. Second is \textit{significant noise}: a considerable portion of words in any single document will produce only noise and not help discern topics or segments. To tackle these issues, we design an unsupervised, computationally efficient methodology called BATS: Biclustering Approach to Topic modeling and Segmentation. BATS leverages three key ideas to simultaneously identify topics and segment text: (i) a new mechanism that uses word order information to reduce sample complexity, (ii) a statistically sound graph-based biclustering technique that identifies latent structures of words and sentences, and (iii) a collection of effective heuristics that remove noise words and award important words to further improve performance. Experiments on \qq{six} datasets show that our approach outperforms several state-of-the-art baselines when considering topic coherence, topic diversity, segmentation, and runtime comparison metrics.
\end{abstract}

\keywords{Biclustering, Topic modeling, Text segmentation}
\vspace{-2mm}

\maketitle

\section{Introduction}\label{sec:introduction}
\vspace{-.6mm}
Innovations in topic modeling and text segmentation have demonstrated the potential for automated analyses of large collections of documents. Broadly speaking, \textit{topic modeling} refers to finding a collection of topics (e.g., groups of words) that represent a given document, whereas \textit{document segmentation} refers to partitioning a document into components (e.g., sentences) about the topics. Existing solutions to these problems are usually based on analyzing statistical patterns in text across datasets that consist of large collections of documents. For example, the popular Latent Dirichlet Allocation (LDA) algorithm for topic modeling \cite{blei2003latent} assumes that each document comprising a corpus, and every word in them, are generated according to the Dirichlet process. With this assumption, EM-based algorithms can then be employed to infer the latent states of the documents~\cite{hofmann2017probabilistic}. Word embedding models such as word2vec~\cite{mikolov2013distributed} and GloVe \cite{pennington2014glove} have also become popular, building joint distributions of word sequences by transforming every word in a document into a high-dimensional space learnt over a large corpus. The resulting high-dimensional representations then help to identify topics in the document and perform segmentation based on these topics.


While algorithms for finding topics \cite{blei2003latent, hofmann2017probabilistic, deerwester1990indexing} and segmenting documents \cite{hearst1997texttiling, riedl2012topictiling, choi2000advances} have been extensively studied, none have fully addressed issues posed by the ``new and single document'' setting. In this setting, we may need to analyze a newly created text whose topics have not been seen before, posing unique modeling challenges we aim to address in this paper. 

\vspace{-2mm}
\subsection{Motivation: The New and Single Document Setting}
\label{ssec:motivation}
\vspace{-.6mm}
The ``new and single document'' setting manifests itself in several contemporary scenarios. We consider first that \textit{new words may arise rapidly and garner the most interest when there is relatively little written about them} \cite{hong2010empirical,zhao2011comparing}. A topic modeling algorithm that relies on a model trained on a dataset from several years ago may have rejected a topic word such as ``COVID'' as out-of-vocabulary at the moment when the public was most interested in finding out about the emerging disease. Although a news aggregator like Google may be able to find a suitable number of documents despite so few being available, a model used by a single outlet (e.g., an internal search on the \textit{New York Times} website) is unlikely to have the same resources.

These neologisms are not the only circumstance in which the ``new and single document'' setting arises. Often, \textit{existing words acquire new context-specific meanings as their usage changes over time} \cite{yao2018dynamic}. For example, before the ubiquity of ``like buttons'' on social media, it may have been safe to assume the word ``like” was too generic to be a suitable topic or topic word. Additionally, \textit{words may have different meaning in different domains}:  consider ``transformer” in the context of computer science \cite{vaswani2017attention}, electrical engineering, and popular culture. Although this issue has helped spark interest in contextualized word embeddings \cite{peters2017semi,mccann2017learned,devlin2018bert}, these models are still sensitive to biases in the historical data used to train them \cite{zhao2019gender}. These differences may be especially relevant as one source begins to add content from a new domain, e.g., when an online archive begins accepting submissions from a new subject area. In each of these settings, when working across a diverse database of documents, treating each one individually -- i.e., as a new and single document -- can allow these semantic differences to be better captured. The same could be said of a general-purpose aggregator that can receive documents in new formats and, as a result, cannot afford to make strong assumptions about future input based on what it has already seen. A key challenge we face as a result of the single document setting is dealing with sparser amounts of text available for modeling.

The ``new and single document'' setting also manifests itself \textit{when rapid processing is required in the presence of constrained computational resources}. These scenarios are becoming more widespread today as a result of edge computing technologies \cite{shi2016edge,tran2017collaborative}, which are moving processing tasks from the cloud to the network edge in an effort to leverage the improved intelligence capabilities of mobile devices for more real-time results. Edge devices are heterogeneous in their processing capabilities, which requires computationally efficient learning techniques \cite{chiang2016fog,tu2020network}. In the text analysis space, we can consider continually-updated document streams (e.g., a social media newsfeed) that need to be processed in real time with limited resources (e.g., on a smartphone), which will require efficient algorithms that avoid computation across a large corpus \cite{yao2009efficient}. When speed is a concern and parallelization is a possibility, this approach allows each document to be treated independently and pushed to the next step in the pipeline as it is processed. It also avoids relying on a large corpus or pre-trained model, both of which may demand substantial computational resources.

\vspace{-2mm}
\subsection{BATS: Objectives and Key Techniques}
\vspace{-.6mm}
In this paper, we design a statistically sound, computationally efficient, unsupervised algorithm that can simultaneously extract topics and segment text from a single document of interest. Designing such an algorithm is challenging because we need to determine model parameters on a sparse dataset. Our development is guided by three key ideas:

\vspace{-2mm}
\subsubsection{Idea 1: Using word ordering information properly}
Traditional topic modeling approaches assume bag-of-words models \cite{blei2003latent} where information on the order in which words appear is neglected. While this has proven effective in the analysis of full corpora, compression to a bag-of-words in the case of a single document may lose information valuable to the task at hand. The recent success of recurrent models and the addition of positional encodings in non-recurrent models for the application of machine translation \cite{vaswani2017attention} is further evidence of the potential value of word-order information on single document.

Motivated by this, our approach aims to leverage word-order information to achieve good performance in the presence of a small, single document training dataset. In particular, we consider the location of words in neighboring sentences. In designing this mechanism, we will make two assumptions guided by basic rules of written language: (i) words appearing in the same sentences are more likely to be on the same topic, and (ii) words located in nearby sentences are more likely to be on the same topic.


\vspace{-2mm}
\subsubsection{Idea 2: Design a biclustering algorithm that addresses sparsity}
For joint topic modeling and text segmentation, we will find it convenient to model documents with sentence-word matrices. But word-to-word interactions and word-to-sentence interactions are noisy by nature \cite{brinton2018efficiency}. This problem becomes even more pronounced with small datasets like single documents where these interactions are likely to be sparse (e.g., the sentence-word matrices for datasets considered in this paper have only 15\% of entries nonzero on average). A well-designed denoising process is necessary so that a sentence-word matrix can be utilized effectively in the downstream topic extraction and text segmentation tasks. \qq{We also seek to avoid reliance on pre-trained word embeddings in this step given the computational cost consideration in our ``single and new document'' setting.}

Our approach connects the denoising problem here with the denoising problem in stochastic block models, \qq{where we consider structure in the ``blocks'' of a document's sentence-word matrix}~\cite{qin2013regularized,wu2019adaptive}. In particular, \qq{motivated by an expected power law of node degrees in a document's sentence-word bipartite graph,} we design a specialized spectral biclustering algorithm which operates on a regularized version of the graph Laplacian to address sparsity. In this algorithm, the topics and segments emerge from clustering the right and left singular vectors of the Laplacian. Given this, we term our overall solution \textbf{BATS}: \textbf{B}iclustering \textbf{A}pproach for \textbf{T}opic modeling and \textbf{S}egmentation.

\vspace{-2mm}
\subsubsection{Idea 3: Optimize heuristics to analyze single documents}
We design several heuristics to enhance our algorithm's performance. Our heuristics are designed based on two major observations: (i) extremely low frequency words tend to introduce noise to document analysis and thus need to be removed, and (ii) part-of-speech (POS) tagging can help to identify more important elements of a document and thus should be considered in our model. Therefore, we remove the low frequency words in the text, but award the important words according to their POS tags. Specifically, because nouns and verbs often convey the body and condition of a sentence, they are typically more informative in topic modeling than other parts of speech \cite{goldberg2012dynamic}. 

\vspace{-2mm}
\subsubsection{Experimental validation}
We evaluate BATS against 12 total baselines, six for topic modeling and six for text segmentation tasks. For topic modeling, we compare performance in terms of topic coherence (i.e., quality of individual topics) and topic diversity (i.e., overlap in topic words) on \qq{five} datasets, in which we find that BATS obtains comparable or higher performance to the best baselines in each case. For text segmentation, we add in one more standard dataset, and show that we outperform all of the baselines in most cases in terms of agreement with a ground truth. \qq{In most cases, the highest performing baselines on topic coherence and text segmentation are based on pre-trained lanaguage models. To this end, we additionally show that the runtime of our method scales significantly better than any of the highest-performing baselines as the size of the input document increases.} For these reasons, we can conclude that BATS obtains the most desirable combination of metrics for the ``new and single document'' setting.

\vspace{-2mm}
\subsection{BATS: Architecture and Roadmap}
\vspace{-.6mm}
Figure \ref{fig:framework} outlines the methodology we develop and provides a roadmap for the paper. The inputs to BATS are a single document and a single hyperparameter (segment number, which also indicates topic number). Then, the two major stages of BATS are preprocessing and extraction. In the preprocessing stage (Sections \ref{ssec:preprocessing} and \ref{ssec:laplacian}), we leverage ideas 1 and 3 to build an effective feature matrix representation of a document under sparse and noisy conditions. In the extraction stage (Sections \ref{ssec:singular_vectors} and \ref{ssec:bicluster}), we use idea 2 to identify low-dimensional representations of the signals through spectral biclustering, with agglomerative methods to segment the text and KMeans to identify the topics. Our subsequent evaluation (Section \ref{sec:experiment}) assesses performance of the resulting text segments and topic words in terms of diversity, coherence, segmentation, and runtime metrics.

\begin{figure*}[t]
    \vspace{-2mm}
    \centering
    \includegraphics[width=\linewidth]{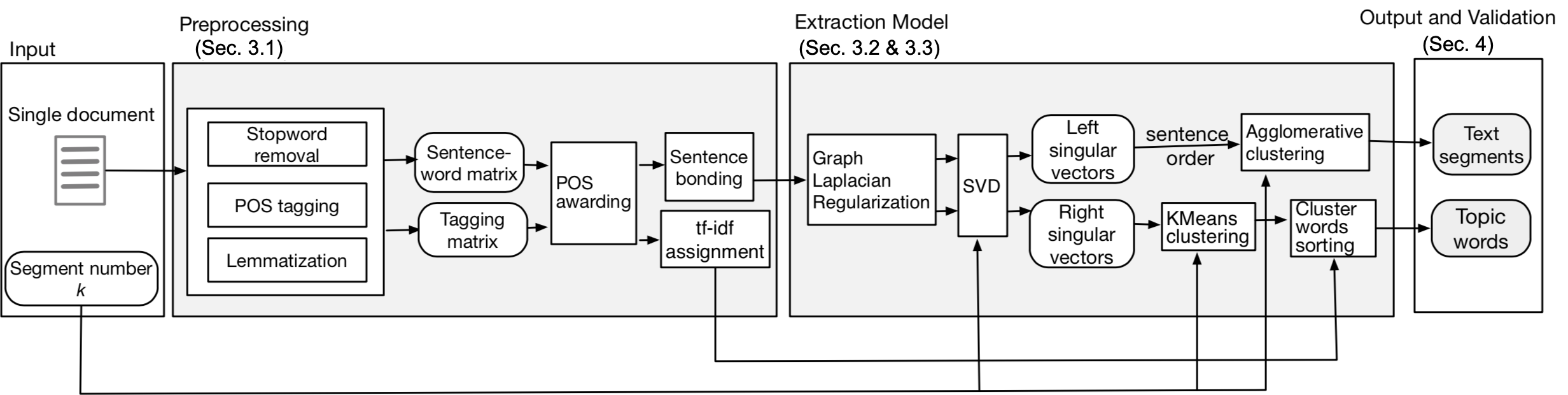}
    \caption{\small{Block diagram summary of the modules comprising BATS, the spectral biclustering methodology we develop in this paper for single-document topic modeling and text segmentation.}}
    \label{fig:framework}
    \vspace{-5.5mm}
\end{figure*}

\vspace{-2mm}
\subsection{Summary of Contributions}
\vspace{-.6mm}
Our key contributions are summarized as follows:
\vspace{-.6mm}
\begin{itemize}
\item We develop a novel methodology called BATS that performs topic modeling and text segmentation on a single document simultaneously. BATS is unsupervised and scalable in its implementation, as it does not rely on pre-trained word embedding models.

\item We connect the joint topic extraction and segmentation problem to spectral biclustering of sentence-word matrices, and show how a factorization of the graph Laplacian with appropriate pre-processing and post-clustering can lead to effective topic modeling and text segmentation.


\item Our evaluation on several datasets establishes that BATS achieves the best combination of topic modeling, text segmentation, and runtime metrics when compared with baselines on single documents. It also verifies the contribution of different components of BATS.



\end{itemize}
\section{Related Work}
\label{sec:related}
\vspace{-.6mm}
We identify three areas of related work: biclustering techniques, topic modeling, and text segmentation algorithms.


\vspace{-2mm}
\subsection{Biclustering Techniques}
\vspace{-.6mm}
Biclustering techniques (e.g., \cite{dhillon2001co,dhillon2003information,rohe2012co}) have been proposed to model interactions among two types of nodes represented in a bipartite graph, with nodes of each type grouped into clusters according to different methods. These techniques are widely used in part because of their sound theoretical properties \cite{dhillon2003information}.
In \cite{dhillon2001co} and \cite{kluger2003spectral}, the authors propose algorithms which translate input data into bipartite graphs and apply spectral techniques to the adjacency matrices; in \cite{dhillon2001co}, a block diagonal structure is assumed, while in \cite{kluger2003spectral}, the case of a checkerboard pattern is considered, with implications to the spectral decomposition.
\cite{rohe2012co} can be viewed as an extension of the algorithm in \cite{dhillon2001co} to deal with asymmetric data matrices. By contrast, \cite{dhillon2003information} proposes a probabilistic approach to graph biclustering, where the input data matrix is treated as a joint probability distribution between two random variables, which are then clustered according to relative entropy and mutual information metrics. Our work builds off the spectral clustering foundations in \cite{dhillon2001co,rohe2012co}, accommodating rectangular sentence-word data matrices instead of traditionally assumed square matrices.

\vspace{-2mm}
\subsection{Topic Modeling}
\vspace{-.6mm}
Several models have been proposed to extract topics from a corpus consisting of multiple long documents, including Latent Semantic Analysis (LSA) \cite{deerwester1990indexing},  Non-negative Matrix Factorization (NMF) \cite{pauca2004text}, Probabilistic Latent Semantic Analysis (pLSA) \cite{hofmann2017probabilistic}, Latent Dirichlet Allocation (LDA) \cite{blei2003latent}, and variants on LDA, e.g., hierachical modeling \cite{teh2005sharing} (see~\cite{daud2010knowledge} for a survey). Recent work has incorporated word and document embeddings jointly to capture ``topic vectors" \cite{angelov2020top2vec}; however, even when these approaches use large pre-trained embedding models, they require corpora on the order of thousands of documents to achieve competitive results, making them unsuitable to applications where data is limited.
Analysis on short texts usually faces the issue of sparsity in word occurrences. To overcome this challenge, works such as \cite{yin2014dirichlet,yan2013biterm} make additional assumptions on word co-occurrence patterns; 
\cite{shi2018short,nguyen2015improving, zhao2017metalda, qiang2017topic} have resorted to word embeddings which leverage pre-trained models; \cite{cui2019short, gao2017finding} depend on further external knowledge including social relationships in microblogs and user preferences. One of these, Semantics-assisted NMF (SeaNMF) \cite{shi2018short}, is a variant of NMF designed to handle short documents by learning semantic relationships between words in the corpus. It has been found to outperform other standard techniques such as NMF and LDA \cite{shi2018short}.

Different from these methods, ours aims at identifying topics in a single, newly created document without an extensive training component. To overcome issues of input data sparsity and noise, BATS turns to word-ordering information between sentences and regularization in the spectral clustering phase, as opposed to making additional assumptions on word co-occurrence patterns. Through evaluation on several datasets, we show that BATS outperforms many of the above models on single document topic modeling in terms of topic coherence, topic diversity, and scalability metrics. In particular, compared to the state-of-the-art SeaNMF method, we will see through our experiments that BATS achieves better performance on topic coherence in most cases, and smaller/more scalable runtimes, which is important in the ``new and single document'' setting we consider in this paper.

\vspace{-2mm}
\subsection{Text Segmentation}
\vspace{-.6mm}
Text segmentation algorithms are designed to detect breakpoints in a document and split the document into multiple segments accordingly. 
Algorithms such as Lexical Chains \cite{manabu1994word} and TextTiling \cite{hearst1997texttiling} use lexical co-occurrence and distribution patterns to divide sets of paragraphs into multi-paragraph sub-blocks that become segments. A potential drawback of these approaches, however, is that the segments are not associated or labeled with explicit topic information, and that it is not always clear how to translate from a lexical distribution to topics. This motivates the consideration of topic modeling and text segmentation jointly.

More recently, to improve segmentation performance, topic-based segmentation methods such as TopSeg \cite{brants2002topic}, LDA\_MDP \cite{misra2011text}, and TopicTiling \cite{riedl2012topictiling} have been proposed. Similar to the topic modeling algorithms discussed above, these segmentation methods depend heavily on the training process, and usually require training on a large corpus \cite{chang2009reading}. This is problematic when only small datasets are available, let alone in the single document case that we consider in this paper. Through biclustering of the sentence-word matrix and development of other pre-processing techniques, BATS does not demand an expensive training process. Other recent approaches, such as SupervisedSeg \cite{koshorek2018text} and SegBot \cite{li2018segbot}, have modeled the task as a supervised learning problem and employed deep recurrent networks while others rely on massive language representation models such as Bidirectional Encoder Representations from Transformers (BERT) \cite{devlin2018bert}. Generally, these models are pre-trained on large historical datasets and can be used in an application with little to no fine-tuning required. However, loading such models may use substantial computational resources and their reliance on historical data can introduce biases and gaps. By contrast, BATS is designed to be computationally efficient and flexible enough to accurately handle novel text. We find that BATS obtains an order of magnitude smaller runtime compared with BERT applied to text segmentation, which is important in our ``new and single document'' setting. Further, our evaluation shows that BATS outperforms the segmentation methods discussed here on single documents across several datasets. 

\vspace{-1mm}
\section{Spectral Biclustering Methodology}
\label{sec:methodology}
\vspace{-.6mm}
As shown in Figure \ref{fig:framework}, our proposed methodology BATS consists of two main stages: the text preprocessing stage (Section \ref{ssec:preprocessing}) and the extraction stage, with the latter broken down into graph Laplacian regularization (Section \ref{ssec:laplacian}), \qq{singular vector extraction (Section \ref{ssec:singular_vectors}),} and sentence/word clustering (Section \ref{ssec:bicluster}). Topics and segments emerge from the word and sentence clusters, respectively. In this section, we detail the development of these modules, and discuss how they address the issue of sparsity associated with modeling a single document. 

\vspace{-2mm}
\subsection{Document Preprocessing and Matrix Construction}
\label{ssec:preprocessing}
\vspace{-.6mm}
Consider an input document comprised of $m$ sentences, indexed $i = 1,...,m$. We denote $\mathcal{W} = \{w_1, ..., w_n\}$ as the set of words we are interested in for modeling, indexed $j = 1,...,n$. In defining $\mathcal{W}$, we do not include all the words that ever appear in the document; instead, a word is included in $\mathcal{W}$ if and only if it appears in more than one sentence in the document and it is not in a stopword list. In this way, $\mathcal{W}$ excludes ``degree-one'' words that can skew models in single documents; we observe that these words often behave as pure noise in our inference algorithms.



Let $X = [X_{ij}] \in \mathbb{R}^{m \times n}$ denote the sentence-word matrix. Even after excluding degree-one words, we still expect this matrix will be sparse, with a significant number of $X_{ij} = 0$. We develop two steps to construct $X$, taking into account both word order and parts-of-speech information:

\vspace{-2mm}
\subsubsection{Step 1. Using parts-of-speech information}
Our first optimization trick is based on parts-of-speech (POS) tags, which are generated through analysis of the word positions in the sentences \cite{goldberg2012dynamic}. In particular, the lexical model presented in \cite{elman1990finding} shows hierarchies exist according to syntactic/semantic similarities of words; looking into them, it is clear that nouns and verbs convey more information than other word types, and thus should be given a larger weight \cite{resnik1993selection}. As a result, letting $X^o = [X^o_{ij}]$ where $X^o_{ij}$ is the number of occurrences of word $w_j \in \mathcal{W}$ in sentence $i$, we define
\begin{equation} \label{eq:pos_awarding}
X^a = X^o + \lambda T ,
\end{equation}
where $T = [T_{ij}]$, $T_{ij} = 1$ if $X^o_{ij} \neq 0$ and $w_j$ is tagged as a noun or verb in sentence $i$, and $T_{ij} = 0$ otherwise. $\lambda > 0$ is a scalar parameter for awarding POS; by default, $\lambda = 1$. In our implementation, Python's {\tt spaCy} module is used to tag the words, as this pre-trained model based on word positions is more robust to novel words or topics than would be, for instance, a word-embedding model.

\vspace{-2mm}
\subsubsection{Step 2. Transformation by using word-order information}
Our incorporation of word-order information is based on the intuition that words in neighboring sentences are likely to be similar in their constituent topics, with this effect decaying as the sentences grow further apart. Assumptions on words appearing within a certain window being related can be found in other text analysis techniques as well, including word embedding models \cite{pennington2014glove}. Concretely, we bond neighboring sentences to the current sentence according to
\begin{equation} \label{eq:sentence_bonding}
X_i = \sum^{i+w}_{\ell=i-w} d^{|\ell-i|} X_\ell^a, \qquad i = 1, ..., m,
\end{equation}
where $X_i = (X_{i1}, ..., X_{in})$ is the $i$th row of $X$ and $X_\ell^a$ is the $\ell$th row of $X^a$ for $\ell = 1,...,m$ (for $\ell < 1$ and $\ell > m$, $X^a_\ell$ is taken as a vector of zeros). Parameter $w$ controls the size of the bonding window, and $d \in [0, 1]$ is a decay rate for the distance. In this way, the presence of a word in one sentence will impact neighboring sentences, and words appearing in several consecutive sentences are increased in importance. Doing so also alleviates the issue of sparsity associated with modeling single documents, as each sentence's data smoothens its neighbors' representations too. The procedure for determining the values of $w$ and $d$ will be discussed in Section \ref{ssec:laplacian}.

\vspace{-2mm}
\subsection{Graph Laplacian and Regularization}
\label{ssec:laplacian}
\vspace{-.6mm}

\subsubsection{\qq{Intuition}}
Consider the bipartite graph $\mathcal{G}(X)$ of the sentence-word matrix $X$, where the sentences $i = 1,...,m$ and words $j = 1,...,n$ each form a node set, and edge $(i,j)$ of weight $X_{i,j}$ is in $\mathcal{G}(X)$ if and only if $X_{i, j} \neq 0$. \qq{Intuitively, this graphical representation will capture some relatedness information between words and sentences, e.g., words that frequently occur together, and are therefore likely to be part of the same topic, should have a high number of short paths to each other. This bipartite graph is likely to have nodes whose degrees follow a power law distribution, both as a product of Zipf's Law for word frequency distributions~\cite{zipf1949human, zipf1953psycho} and because empirically real world networks often exhibit power laws~\cite{eikmeier2017revisiting}. The graph Laplacian is known to be a useful matrix representation for clustering graphs with heterogeneous node degrees given the relationship between its spectral properties and the connected components of the graph~\cite{von2007tutorial}. To further address the sparsity issue, we will develop a regularized version of the Laplacian, as it has been shown to improve the performance of spectral clustering algorithms on sparse matrices~\cite{amini2013pseudo, chaudhuri2012spectral, qin2013regularized}.} 

\vspace{-2mm}
\subsubsection{Constructing the Laplacian}
Formally, define two diagonal matrices $P = \mbox{diag}(P_1, ..., P_n) \in \mathbb{R}^{n \times n}$ and $O = \mbox{diag}(O_1,...,O_m) \in \mathbb{R}^{m \times m}$ where $P_j = \sum_{i=1}^m X_{ij}, j = 1,...,n$ and $O_i = \sum_{j=1}^n X_{ij}, i = 1,...,m$ are the row and column sums of $X$. With regularization parameters $\tau_p, \tau_o \geq 0$, the regularized graph Laplacian $L \in \mathbb{R}^{m \times n}$ is computed according to
\begin{equation} \label{eq:laplacian}
        P_{\tau} = P + \tau_p I_p, \qquad O_{\tau} = O + \tau_o I_o, \qquad L = (O_{\tau})^{-\frac{1}{2}} X (P_{\tau})^{-\frac{1}{2}},
\end{equation}
where $I_p$ and $I_o$ are identity matrices. Multiplying these by regularization parameters $\tau_p$ and $\tau_o$ can resolve issues due to poor concentration since the degrees for every vertex are inflated. Following prior work \cite{qin2013regularized} which has indicated that such regularization parameters should be proportional to the average degrees of the vertices (so that the asymptotic bounds will be indicative of the mis-clustering rate), we set the average degrees as defaults, i.e., $\tau_p = \sum_j P_{j} / n$ and $\tau_o = \sum_i O_i / m$.

\setlength{\textfloatsep}{0.1cm}
\begin{algorithm}[t]
\small
\setstretch{0.8}
\caption{Matrix decomposition on regularized Laplacian.}
\bgroup
\def\arraystretch{1.2}
\begin{algorithmic}[1]
    \Statex \textbf{INPUT:} Original sentence-word matrix $X^o$, POS-based matrix $T$
    \Statex \textbf{PARAMETER:} Awarding value $\lambda$, window size $w$, decaying rate $d$, segment number $k$
    \Statex \textbf{OUTPUT:} Matrix $U$ for sentences and $V^T$ for words
    \Function{MAT\_DECOMP}{$X^o$, $w$, $d$}
        \If{$\lambda > 0$}
            \State $X^a = X^o + \lambda T$ \qquad  //Word awarding
        \Else
            \State $X^a = X^o$
        \EndIf
        \State $F \gets $ tf-idf($X^a$) \qquad //Tf-idf assignment
        \For{$i \gets 1,...,n$}
            \State $X_i = \sum^{i+w}_{\ell=i-w} d^{|\ell-i|} X_\ell^a$ \qquad //Sentence bonding
        \EndFor
        \For{$j \gets 1,...,n$}
            \State $P_j \gets \sum_{i=1}^m X_{ij}$
        \EndFor
        \For{$i \gets 1,...,m$}
            \State $O_i \gets \sum_{j=1}^n X_{ij}$
        \EndFor
        \State $\tau_p \gets \sum_j P_{j} / n$, $P_{\tau} \gets P + \tau_p I_p$ \qquad //Regularization
        \State $\tau_o \gets \sum_i O_i / m$, $O_{\tau} \gets O + \tau_o I_o$ \quad //Regularization
        \State $L = (O_{\tau})^{-\frac{1}{2}} X (P_{\tau})^{-\frac{1}{2}}$ \qquad //Graph Laplacian
        \State $U \Sigma V^T = L$ \qquad  //Singular value decomposition
        \For{$u' \gets$ rows of $U$}
        \State $u' \gets u'[1:k]$ \qquad //Reserve first $k$ dimensions
        \State $u' \gets u' / \sqrt{\sum_{i}u_i^{'2}}$ \qquad //L2 normalization on $U'$
        \EndFor
        \For{$v' \gets$ rows of $V$}
        \State $v' \gets v'[1:k]$ \qquad //Reserve first $k$ dimensions
        \State $v' \gets v' / \sqrt{\sum_{i}v_i^{'2}}$ \qquad //L2 normalization on $V'$
        \EndFor
        \State \textbf{return} $U'$, $V'$, $F$ \quad  //$U'$ for sentences, $V'$ for words
    \EndFunction
\end{algorithmic}
\egroup
\label{algo1}
\end{algorithm}

\vspace{-2mm}
\subsection{\qq{Sentence and Word Singluar Vectors}}
\label{ssec:singular_vectors}
\vspace{-.6mm}

\subsubsection{\qq{Intuition}}
\qq{Our main observation is that sentence-word interactions in a document may be modeled by a stochastic co-block model (SCBM)~\cite{karrer2011stochastic,qin2013regularized}, whose structure can be inferred through spectral clustering. SCBM is a bipartite graph generalization of the standard block model~\cite{rohe2012co}, where there are $k_1$ blocks of nodes on the left side of the graph and $k_2$ blocks of nodes on the right. The interactions between nodes at two sides are governed by a matrix $B \in \mathbb{R}^{k_1\times k_2}$, i.e., for any $u$ in the $i$-th left block and $v$ in the $j$-th right block, $\Pr[\{u, v\} \in E] = B_{i,j}$. In our setting, each right node corresponds to a word, and words in the same block can be interpreted as in the same topic. Each node to the left corresponds to a sentence, and sentences in the same block corresponds to the same segment. A word is connected to a sentence if it appears in the sentence at least once.}


\qq{Spectral clustering can be used to recover the block structure in an SCBM. This algorithm first finds the leading singular vectors and values of the bipartite graph’s adjacency matrix, and then runs standard clustering algorithms (e.g., k-means) on the leading singular vectors/values. The algorithm is known to be effective for sparse SCBM because it can effectively remove singular vectors along directions that contain stronger noise than signal. A synthetic example of this is shown in Fig.~\ref{sythetic}, where we have generated 2000 sentences and 5000 words according to an SCBM with four blocks each. The dataset is too sparse for standard degree-based algorithms, such as counting shared neighbors~\cite{jarvis1973clustering} or using Jaccard similarity~\cite{niwattanakul2013using}, to recover these blocks (Fig.~\ref{raw_cluster}). Nevertheless, we notice that all the singular vectors/values beyond the first three correspond to noise (Fig. ~\ref{eigenvalues}). Therefore, when we perform clustering only on the leading dimensions of the singular vectors, we are able to exactly recover the blocks (Fig.~\ref{U_cluster}). This motivates us using the singular vectors to cluster words and sentences.
}

\begin{figure}[t]
\vspace{-1.5mm}
\centering
\subfloat[\qq{Top-10 singular values of the sentence-word matrix.}]{
  \centering
  \includegraphics[width=.26\linewidth]{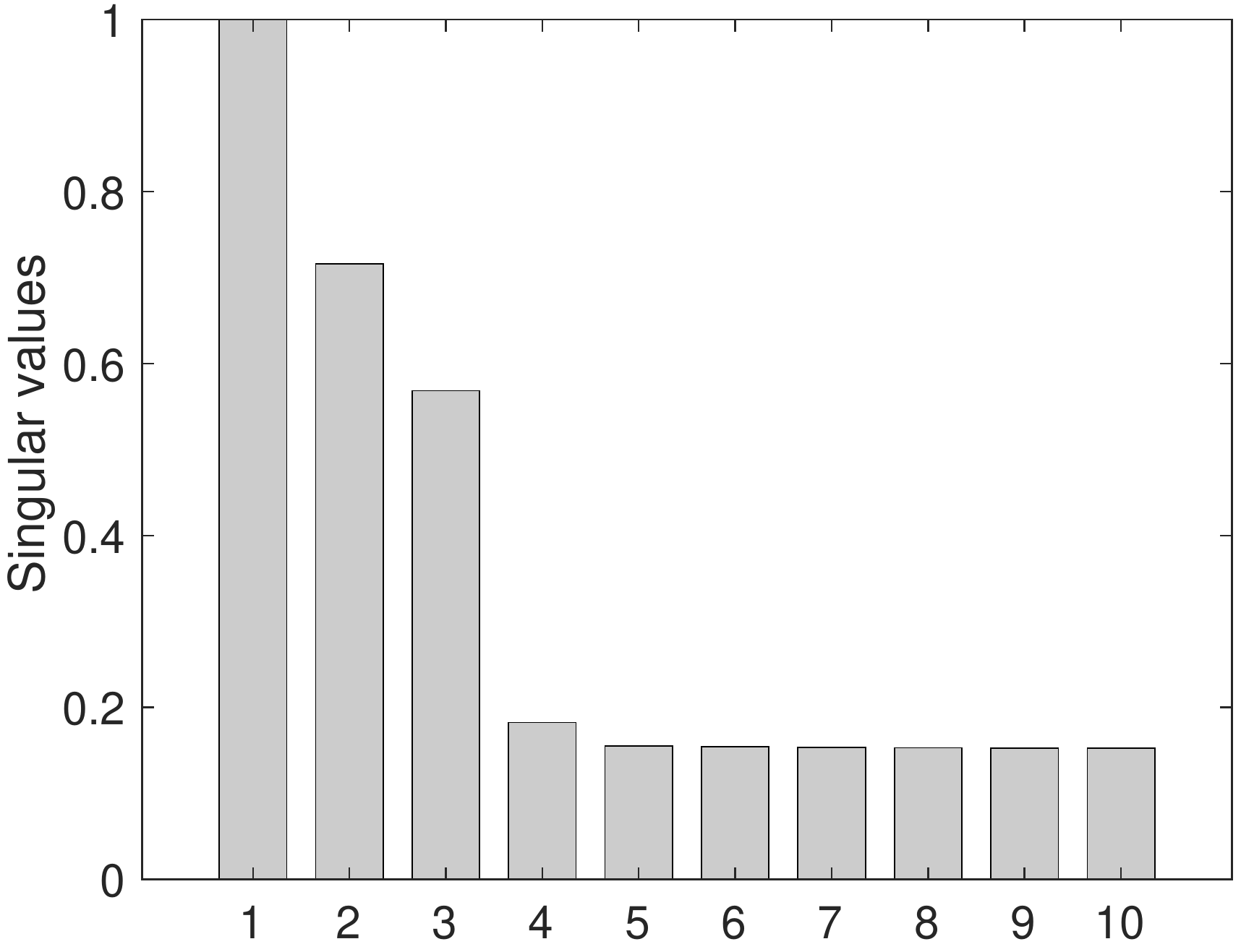}
  \label{eigenvalues}
} \hspace{5mm}
\subfloat[\qq{Heatmap of pairwise distances between words.}]{
  \centering
  \includegraphics[width=.26\linewidth]{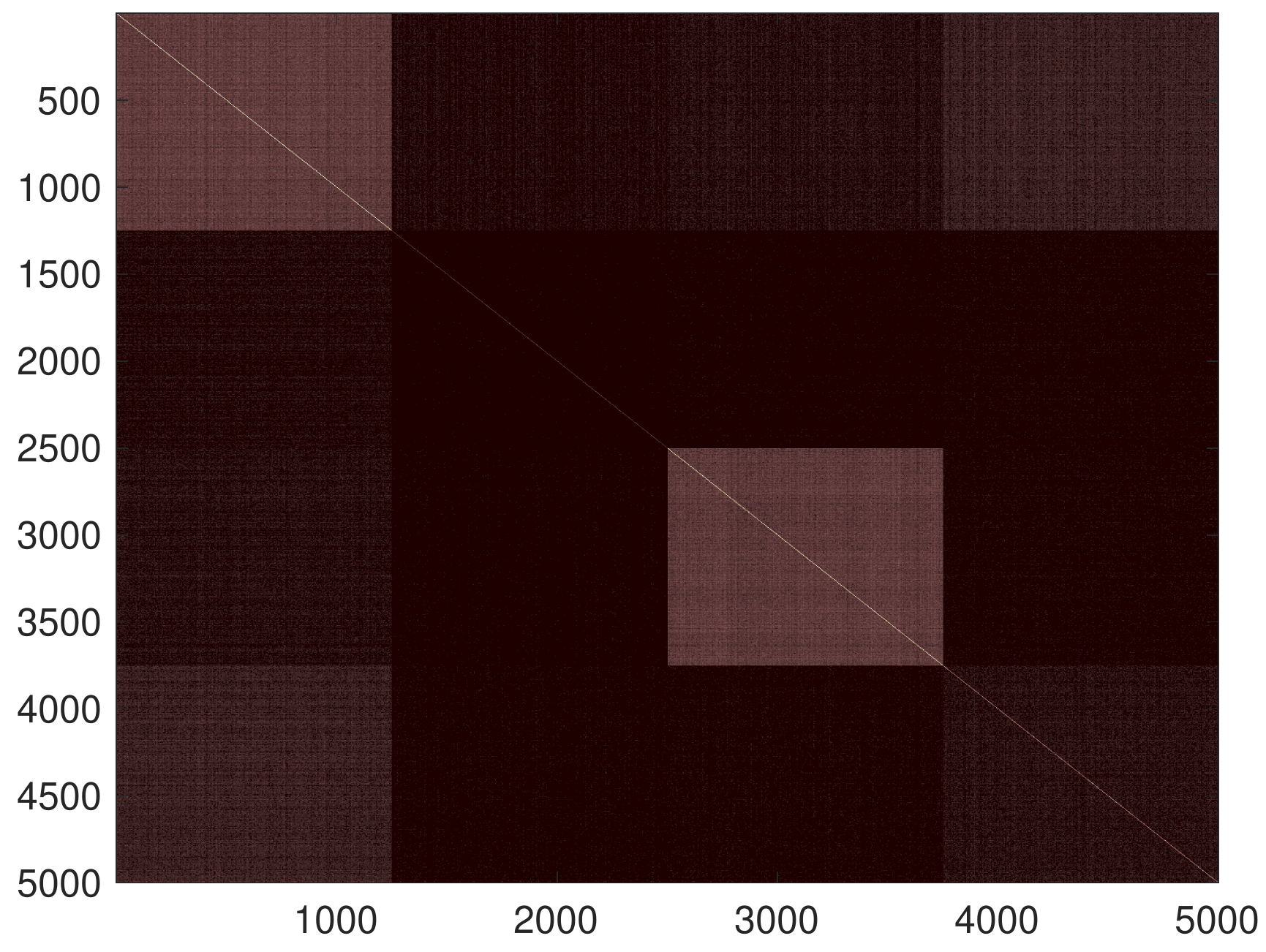}
  \label{raw_cluster}
} \hspace{5mm}
\subfloat[\qq{Right singular vectors plotted in the first two dimensions.}]{
  \centering
  \includegraphics[width=.26\linewidth]{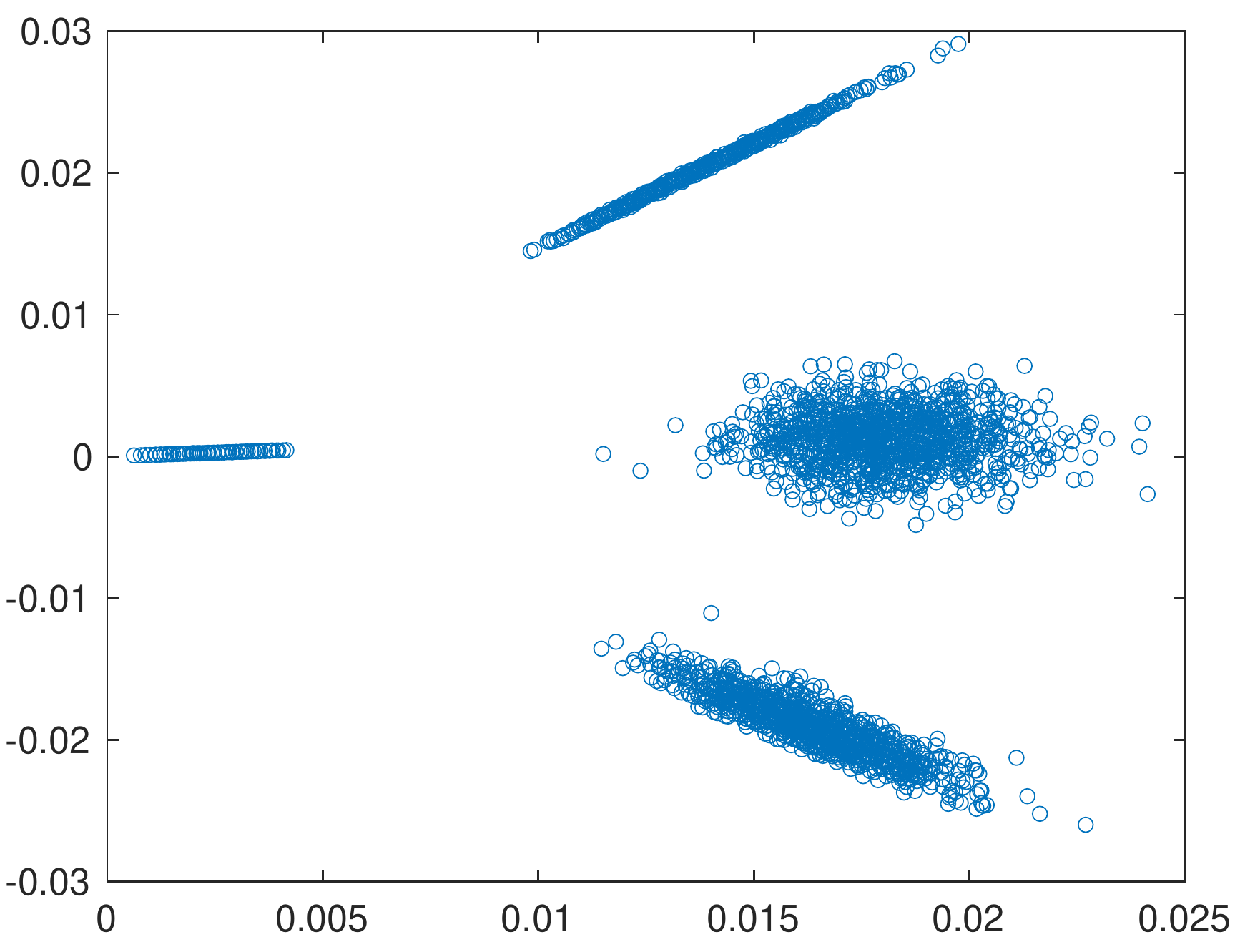}
  \label{U_cluster}
}%
\vspace{-1.5mm}
\caption{\qq{Synthetic example illustrating the use of spectral decomposition to recover the block structure in a sentence-word matrix generated according to an SCBM with four blocks of sentences and words.}}
\label{sythetic}
\end{figure}

\vspace{-2mm}
\subsubsection{Obtaining a low dimensional embedding}
We consider the singular value decomposition (SVD) of the graph Laplacian $L$. By definition, the SVD yields
\begin{equation}
L = U \Sigma V^T,
\end{equation}
where $U \in \mathbb{R}^{m \times m}$ and $V \in \mathbb{R}^{n \times n}$ are unitary matrices and $\Sigma$ contains the singular values $\sigma_1,...,\sigma_{\max \{m,n\}}$ on its diagonal. Since $L^T L = V (\Sigma^T \Sigma) V^T$ is a measure of similarity between words, counting their degrees of connectivity via sentences, and $L L^T = U (\Sigma \Sigma^T) U^T$ is a measure of similarity between sentences, counting their degree of connectivity via words, the SVD can be used to cluster words (using $V$) and sentences (using $U$). Further, as the eigenvalues of $L^T L$ and $L L^T$ are the square of the singular values in $\Sigma$, we introduce another parameter $k$ which denotes the number of left $U_1, ..., U_k \in \mathbb{R}^{m}$ and right $V_1, ..., V_k \in \mathbb{R}^{n}$ dominant singular vectors used, where we assume the singular values are in decreasing order $\sigma_1 \geq \sigma_2 \geq \cdots$. We then re-normalize the rows of the resulting matrices
\begin{equation}
\label{eq:normSVD}
V' = [V'_{i\ell}] = [V'_1 \; V'_2 \; \cdots \; V'_k], \quad U' = [U'_{j\ell}] = [U'_1 \; U'_2 \; \cdots \; U'_k]
\end{equation}
to have unit length, i.e., so that $\sum_\ell V_{i\ell}^{'2} = \sum_\ell U_{j\ell}^{'2} = 1$ for each sentence $i$ and word $j$. Following \cite{von2007tutorial}, which suggests that the dimensionality should be consistent with the number of clusters to be grouped, we use the same parameter $k$ for both $U$ and $V$.


The full decomposition process developed in Sections~\ref{ssec:preprocessing}-\ref{ssec:singular_vectors} is summarized in Algorithm \ref{algo1}.

\vspace{-2mm}
\subsubsection{Impact of $w$ and $d$}
\label{sssec:SVD_heatmap}
Recall the window $w$ and decay $d$ parameters from (\ref{eq:sentence_bonding}). We investigate the impact of these parameters on the matrix decomposition in (\ref{eq:normSVD}) by considering the L2-norm distances between the resulting sentence vectors in $U$. Figure \ref{fig:wd_svd_heatmap} gives heatmaps of these distances for two arbitrary documents in one of our datasets (see Section \ref{ssec:datasets}). Since neighboring sentences should cover similar topics, we seek values of $w$ and $d$ for which ordering information is clearly embedded in the matrix. In Figure \ref{fig:wd_svd_heatmap}(a), for small values of $w$ (i.e., $w = 0, 1$), the sentence order is less clear as the elements near the diagonal are more blurry. As $w$ increases, the pattern becomes more obvious, and when $w = 3$ we observe clear block patterns in the heatmap. When $w$ is increased further (i.e., to $w = 5$), the sharpness of the block pattern does not continue to improve; intuitively, sentences at the far ends of the bonding window for large $w$ will have higher dissimilarity, but this effect is blunted by the decay $d$ (which is $0.7$ here). Since a higher $w$ also increases the runtime of the method, in considering several documents, we find that the best choice of $w$ is typically between $3$ and $5$ (i.e., the number of topic-neighboring sentences is $6$ to $10$).

\begin{figure*}[t]
    \vspace{-4mm}
    \centering
    \includegraphics[scale=0.19]{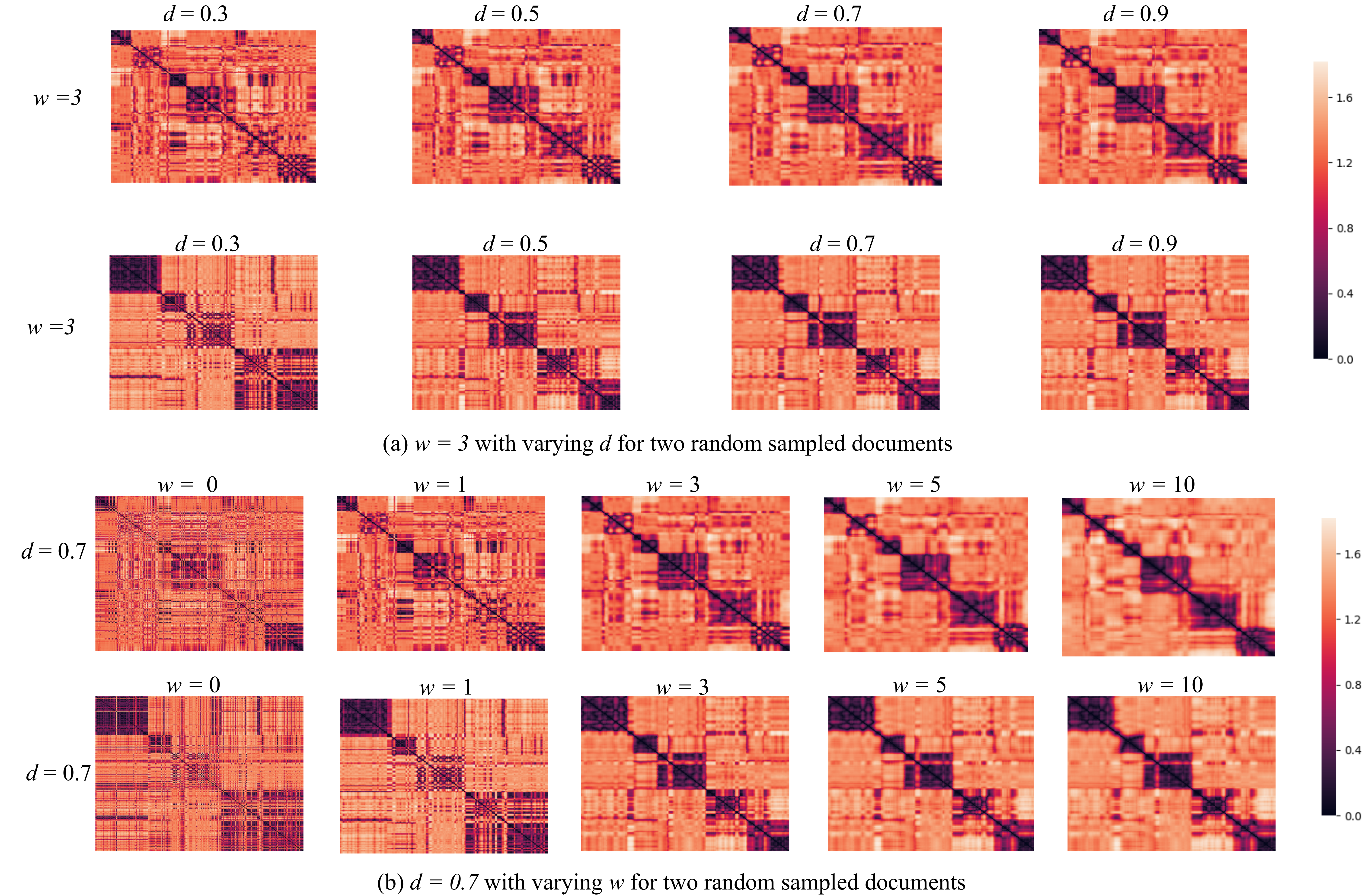}
    \vspace{-2mm}
    \caption{\small{Heatmaps of the pairwise distances between sentence vectors in the SVD for two random sampled document taken from the Introductions dataset  under different values of parameters $w$ and $d$. (a) varies $w$ for fixed $d = 0.7$ and varies $d$ for fixed $w = 3$ on the first randomly sampled document. (b) varies $w$ for fixed $d = 0.7$ and varies $d$ for fixed $w = 3$ on the second randomly sampled document.}}
    \label{fig:wd_svd_heatmap}
\end{figure*}

By this logic, then, the value of $d$ should be significantly lower than $1$. As it is decreased in Figure \ref{fig:wd_svd_heatmap}(b) (i.e., from $d = 0.9$), we see that the sharpness of the blocks improves, with $d = 0.7$ giving the clearest pattern. Beyond this (i.e, $d = 0.5, 0.3$), however, the sharpness begins to decrease. In these cases, neighboring sentences are assigned lower weights, confirming that the SVD uncovers topic similarity between neighbors. In considering several documents, we find that the best choice is $d \approx 0.7$ for this reason. In Section~\ref{sec:experiment}, we will verify that $w = 5, d = 0.7$ leads to the highest average performance on the topic modeling and text segmentation metrics for all datasets considered.  

\vspace{-2mm}
\subsection{Word and Sentence Clustering}
\label{ssec:bicluster}
\vspace{-.6mm}
With the embedding from (\ref{eq:normSVD}) in hand, we move to obtain topics and segments via spectral clustering of the word $V$ and sentence $U$ matrices respectively. Following~\cite{von2007tutorial}, we consider the problem from a graph cut point of view, where the cuts are taken on a similarity graph of words or sentences.

Formally, let $G = [g_{i,j}] \in \mathbb{R}^{n \times n}$ be the similarity matrix among a set of $n$ nodes $v_1,...,v_n$ (i.e., words or sentences), where $g_{i,j} \geq 0$ is the similarity between nodes $v_i$ and $v_j$. We seek to minimize $KCut(S_1, ..., S_k) = \frac{1}{2}\sum_{p=1}^k cut(S_p, \Bar{S_p})$ while $cut(S_p, \Bar{S_p}) = \sum_{\substack{i \in S_p, j \in \Bar{S_p}}}g_{i,j}$. Note $S = \{S_1, S_2, ..., S_k\}$ is a grouping of the nodes into $k$ disjoint sets $S_1,...,S_k$. A simple and straightforward solution for this minimization problem is to cut off individual nodes which are weakly connected to the rest. However, there is usually no topic with one word or text segment with one sentence, therefore, the groups of words or sentences are supposed to have more balanced sizes. As a result, the objective function needs to take group sizes into consideration and build the balanced cut problem
\begin{equation} \label{eq_bcut}
BCut(S_1, ..., S_k) = \sum_{p=1}^k \frac{cut(S_p, \Bar{S_p})}{|S_p|}.
\end{equation}
Taking group sizes into account makes the problem NP hard and requires further relaxation. We reorganise the problem by defining a group indication matrix $H = [h_1 \cdots h_k] \in \mathbb{R}^{n \times k}$ consisting of $k$ weighted indicator vectors $h_p = (h_{1,p}, ..., h_{n,p})^T, p = 1, ..., k$ where
\begin{equation} \label{eq_H}
    h_{i,p} = 
    \begin{cases}
    1 / \sqrt{|S_p|} & \text{if node $v_i \in S_p$}\\
    0 & \text{otherwise.}
    \end{cases}
\end{equation}
We can see $H^T H = I$ where $I$ is the identity. Letting $G$ be the node similarity graph, we define its degree matrix as $D = \mbox{diag}(d_1,...,d_n)$ where $d_i = \sum_j g_{ij}, i=1,...,n$, and the unnormalized graph Laplacian as $L_u = D - G$. Through some easy math, we can get
\begin{equation} \label{eq_hlh}
    h_p^T L_u h_p = \frac{cut(S_p, \Bar{S_p})}{|S_p|}, \text{ for } p = 1,...,k.
    \vspace{-1mm}
\end{equation}
Combining this with (\ref{eq_bcut}), we conclude that
\begin{equation} \label{eq_trHLH}
\vspace{-1mm}
BCut(S_1, ..., S_k) = \sum_{p=1}^k h_p^T L_u h_p = Tr(H^T L_u H).
\vspace{-1mm}
\end{equation}
Thus, the minimization problem can be presented as
\begin{equation} \label{eq_conMin}
\begin{split}
    \min_{\substack{S_1, \cdots, S_k}} Tr(H^TL_uH) \text{ subject to $H^TH=I$, $H$ defined as (\ref{eq_H})}. 
\end{split}
\end{equation}
This problem is equivalent to minimizing (\ref{eq_bcut}) and is NP-hard. Therefore, in the BATS methodology, we relax this constraint by allowing $h_{i,p} \in \mathbb{R}$ to take any arbitrary value, and turn \eqref{eq_conMin} into 
\begin{equation} \label{eq_relMin}
    \min_{\substack{H \in \mathbb{R}^{n\times k}}} Tr(H^TLH)\text{ subject to } H^TH=I. 
    \vspace{-1mm}
\end{equation}
This approach allows us to employ clustering algorithms to solve the minimization problem. In the following sections, we detail our methods for solving (\ref{eq_relMin}) to cluster words (Section \ref{ssec:topics}) and sentences (Section \ref{ssec:segments}), respectively. 

\vspace{-2mm}
\subsubsection{Topics via word clustering}
\label{ssec:topics}
To obtain the topics, we consider spectral clustering for the normalized word matrix $V'$ in (\ref{eq:normSVD}). Since each row $v'_j \in \mathbb{R}^k, j = 1,...,n$ of $V$ is a $k$-dimensional representation of a word, the clustering optimization in (\ref{eq_trHLH}) takes these $n$ words as the nodes to be grouped into $k$ sets $S_1, S_2, ..., S_k$ based on the similarities $g_{j,j'}$ between pairs of word representation vectors $v_j$ and $v_{j'}$. This is equivalent to minimizing the pairwise deviations between representations of nodes within the sets:
\begin{equation} \label{eq_kmeans}
    S = \underset{\{S_1,...,S_k\}}{\arg \min} \sum_{p = 1}^{k} \frac{\sum_{j, j' \in S_p} ||v'_{j} - v'_{j'}||^2}{2|S_p|}.
\end{equation}
This optimization is equivalent to a KMeans clustering \cite{lloyd1982least} of the vectors $v'_1, ..., v'_n$. The number of clusters is determined by the number of segments $k$, and each resulting word cluster $S_p$ refers to a topic. To obtain a description of each topic in terms of its top words, we further rank the words in each cluster according to the standard term frequency-inverse document frequency (tf-idf) metric \cite{rajaraman2011mining} applied to the awarded sentence-word matrix $X^a$ in (\ref{eq:pos_awarding}). The tf-idf assignment matrix $F$ is obtained during the matrix decomposition process and it is the same size of $X^a$. To assign each word a single tf-idf score for sorting, we sum the tf-idf scores of each word over all sentences. 

\vspace{-2mm}
\subsubsection{Segments via sentence clustering}
\label{ssec:segments}
We then turn to clustering the normalized sentence matrix $U'$ in (\ref{eq:normSVD}) to obtain the segments. Compared with the topic clustering problem, this one will have more constraints since the clusters are related to the sentence orders and the cluster sizes can be uneven. As a result, the KMeans method is no longer applicable, and we resort instead to an agglomerative clustering method with connectivity constraints \cite{davidson2005agglomerative} to solve (\ref{eq_trHLH}). In agglomerative clustering, nodes are grouped together sequentially according to pairwise similarities: the process recursively merges two groups of nodes that yield the minimum between-cluster distance. 

Formally, recall that the sentence embeddings are the rows $u'_i \in \mathbb{R}^k, i = 1,...,m$ of the matrix $U'$. We form the graph of sentences $G_S = (V, E_S)$, where $V = \{i | i = 1, 2, ..., m\}$ and $E_S = \{(i, j) | i, j = 1, 2, ..., m, i \neq j \}$, with the weight of the edge $(i, j) \in E_S$ being the cosine similarity between $u'_i$ and $u'_j$. The ordering constraint should be such that only adjacent sentences can be clustered; we therefore initialize a connectivity graph $G_C^1 = (V, E_C^1)$ where for all pairs of nodes $i, j \in V$, $(i, j) \in E_C^1$ if and only if $j = i + 1$, i.e., each node connects to the next sentence. Letting $S_i^r$ denote cluster $i$ of the sentences at the $r$th iteration, initialized as $S_i^1 = \{i\}$ for each $i$, the merging operation of our constrained agglomerative clustering is given by
\begin{equation} \label{eq_agg}
         (S_{p}^{r}, S_{q}^{r}) = \underset{(i, j) \in E_C^r}{\arg\min} \; \overline{D}(S_{i}^{r}, S_{j}^{r}), \quad
         S_p^{r+1} = S_{p}^{r} \cup S_{q}^{r}, \quad
         S_q^{r+1} = \emptyset, \quad
         E_C^{r+1} = E_C^r \setminus \{(p,q)\} \cup \{(p, a^r(q)\},
\end{equation}
for $r=1,...,m-k$, where $\overline{D}(S_{i}^{r}, S_{j}^{r})$ refers to the distance between the sets $S_i^r$ and $S_j^r$, which is treated as the average distance between sentences in $S_i^r$ and $S_j^r$ according to their link weights in $E_S$, and $a^r(q)$ is the single node that $q$ points to in $G_C^r$. In each iteration, the two adjacent clusters of sentences that have minimum distance are merged together. The procedure ends after $r = m-k$ iterations, when there are $k$ clusters $i$ for which $S_i^k \neq \emptyset$; these are taken as the segments.

\vspace{-2mm}
\subsubsection{Advantages of joint topic modeling and text segmentation}
\label{sssec:joint_advantages}
In the BATS methodology, segmentations are produced from sentence clusters whereas topics are produced from word clusters. A key observation we have made in the design of BATS is that we can indeed use one simple bi-clustering algorithm to simultaneously identify sentence and word clusters, based on two simple assumptions: (i) when two sentences are in the same cluster (segment), the same set of words are more likely to appear in both sentences, and (ii) when two words are in the same cluster (topic), these words are more likely to co-appear in the same sentences.

\begin{algorithm}[t]
\small
\setstretch{1.2}
\caption{BATS topic modeling and text segmentation.}
\bgroup
\def\arraystretch{1.1}
\begin{algorithmic}[1]
    \Statex \textbf{INPUT:} Single text document, segment number $k$
    \Statex \textbf{PARAMETER:} Awarding value $\lambda$, window size $w$, decaying rate $d$
    \Statex \textbf{OUTPUT:} Topic words, text segments
    \Procedure{MainProcess}{text, $k$, $\lambda$, $w$, $d$}
    \State Remove degree-one words from text.
    \State Compute sentence-word matrix $X^o$ and POS-based matrix $T$.
    \State $U', V', F \gets$ MAT\_DECOMP($X^o$, $T$, $\lambda$, $w$, $d$, $k$).  // Alg.\ref{algo1}
    \State \multiline{%
    Cluster the rows of $V'$ with KMeans into $k$ clusters. Sort the words in each cluster by $F$ (tf-idf scores).}
    \State \multiline{%
    Cluster the rows of $U'$ with agglomerative clustering into $k$ clusters with a connectivity constraint.}
    \State Topic words $\gets$ Sorted words in each cluster
    \State Text segments $\gets$ Sentence clusters
    \State \textbf{return} Topic words, Text segments
    \EndProcedure
\end{algorithmic}
\egroup
\label{algo2}
\end{algorithm}
\setlength{\intextsep}{0pt}

In other words, the segmentation and topic modeling tasks are coupled through the bi-clustering algorithm in BATS, which uses sentence-word interactions to identify a joint latent structure for segments and topics. The model of which sentences are contained in which segments simplifies the process of identifying topics, and vice versa. Let us consider a concrete example, in which a document consists of 10 sentences. Each sentence may have different topic distributions, e.g., sentence 1 has 80\% of words from a ``science'' topic and 20\% from a ``health'' topic, whereas sentence 7 has 40\% from the ``science'' topic and 60\% from the ``health'' topic. If we assume that words from the same segment come from the same topic distribution, when the ground-truth of the segmentation is known -- e.g., segment 1 consists of the first four sentences and segment 2 consists of the last six sentences -- we are narrowing the set of topics these words are assumed to be sampled from. Specifically, in this example, we effectively observe two sets of words sampled from two topic distributions (one for each segment), instead of 10 (one for each sentence). This extra information simplifies the process of identifying topics. In fact, the bi-clustering algorithm implicitly uses this information in the singular value decomposition step: it uses sentence clusters (segments) to improve word clusters, and vice versa. Therefore, we expect that doing topic modeling and text segmentation jointly will improve the quality of both tasks, with the added advantage of being more computationally efficient than doing both separately.

\vspace{-2mm}
\subsection{BATS Methodology Summary}
\vspace{-.6mm}
The full BATS topic modeling and text segmentation methodology (including Algorithm \ref{algo1}) developed throughout this section is summarized in Algorithm \ref{algo2}. The inputs are the single text document of interest and $k$, the number of topics and segments to extract. The algorithm begins with denoising, which removes all degree-one words, and constructing the sentence-word matrix $X^o$ and parts-of-speech matrix $T$. $X^o$ and $T$ are then inputted to the matrix decomposition procedure, detailed in Algorithm \ref{algo1}, which employs sentence bonding and graph Laplacian regularization to obtain the matrices $U'$ and $V'$, containing the encodings of the sentences and words, and the tf-idf matrix $F$. The rows of $V'$ are then clustered into $k$ clusters of words via KMeans, with the words in each cluster sorted by tf-idf score in $F$, forming the topics. Finally, the rows of $U'$ are clustered into $k$ clusters of sentences via constrained agglomerative clustering, forming the segments.

\vspace{-2mm}
\subsection{Time Complexity Analysis}
\label{ssec:complexity}
\vspace{-.6mm}
Recall from Section~\ref{ssec:motivation} that an important setting for the ``single and new document'' setting is in the presence of constrained computational resources. Thus, we also perform a complexity analysis to investigate the efficiency of our algorithm, given the importance of low runtimes.

From Algorithm \ref{algo1}, note that there are three main procedures in BATS which have major impacts on the time complexity: matrix decomposition, KMeans clustering for words, and agglomerative clustering for sentences. The matrix decomposition process consists of multiple matrix multiplication and summation operations, of which matrix multiplication dominates with a complexity of $O(\max(m^3, n^3))$, where $m$ and $n$ are the number of sentences and words, respectively. However, in our application, the sentence-word matrix is sparse and therefore enables some optimizations in the matrix decomposition procedure. Specifically, iterative methods such as primme~\cite{stathopoulos2010primme} can decompose sparse matrices with a time complexity of $O(mnr)$, where $r$ is the number of singular vectors. Formally, we can show that when the number of singular vectors is small, the time complexity is described as follows:
\begin{lemma}
\vspace{-2mm}
\label{lemma1}
For a given document, assume $m$ and $n$ are the number of sentences and words, respectively. If the number of non-zero singular values of the sentence-word matrix $X^a$ is sufficiently small, then the runtime of BATS on the document can be approximated as $O(mnr +n)$.
\vspace{-2mm}
\end{lemma}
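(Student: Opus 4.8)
The plan is to bound the total runtime of BATS by separately analyzing the three dominant stages identified from Algorithm~\ref{algo1} and Algorithm~\ref{algo2} --- the matrix decomposition \textsc{Mat\_Decomp}, the KMeans clustering of the $n$ word vectors, and the constrained agglomerative clustering of the $m$ sentence vectors --- and then summing their costs. The central role of the hypothesis is to bound the number of extracted singular vectors: if $X^a$ has only a small number $r$ of nonzero singular values, I would take $r$ as the effective rank passed to the sparse SVD, and since the clustering dimension satisfies $k \le r$, both $r$ and $k$ act as small (effectively constant) quantities throughout. I would make this dependence explicit at the outset, so that the $k$- and iteration-dependent factors appearing in the clustering steps can be absorbed into constants.

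For the matrix decomposition stage, I would first observe that every preprocessing operation preceding the SVD is at most linear in the number of matrix entries: the POS awarding $X^a = X^o + \lambda T$, the tf--idf assignment, the sentence bonding in \eqref{eq:sentence_bonding} (whose window $2w+1$ contributes only a constant factor), the row and column sums $P_j, O_i$, and the formation of $L = O_\tau^{-\frac{1}{2}} X P_\tau^{-\frac{1}{2}}$ by diagonal scaling are all $O(mn)$ (or sparse-linear). The dominant cost is the singular value decomposition itself, for which the iterative solver primme~\cite{stathopoulos2010primme} computes $r$ singular triples of a sparse $m\times n$ matrix in $O(mnr)$ time. The subsequent truncation to the first $k$ dimensions and row-wise L2 normalization cost $O((m+n)k)$, which is dominated by $O(mnr)$ because $k \le r$. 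Hence this stage contributes $O(mnr)$.

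Next I would bound the two clustering stages. KMeans on the $n$ rows of $V'$, each lying in $\mathbb{R}^k$ and grouped into $k$ clusters, costs $O(nk^2 I)$ for $I$ Lloyd iterations (the assignment step dominates the centroid update); with $k \le r$ small and $I$ bounded, this collapses to $O(n)$, which is precisely the additive $+n$ term in the statement. The constrained agglomerative clustering of the $m$ rows of $U'$ performs $m-k$ merges over the chain connectivity graph of \eqref{eq_agg}; since each merge touches only a constant number of adjacent edges and each inter-cluster distance comparison is $O(k)$, a naive implementation costs $O(m^2 k)$ (improvable to $O(k\,m\log m)$ with a priority queue over candidate merges). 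The final sorting of words within clusters by the tf--idf scores in $F$ adds at most $O(n\log n)$, which I would either fold into the $O(mnr)$ term or note as lower order.

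Finally I would combine the bounds into $O(mnr) + O(n) + O(m^2 k) + O(n\log n)$ and simplify. The main obstacle --- indeed the only nontrivial point --- is arguing that the agglomerative $O(m^2 k)$ term is dominated, so that no $m^2$ factor survives in the stated bound: this follows because in the single-document regime the word count dominates the sentence count ($m \le n$) and $k \le r$, whence $m^2 k \le mnr$. With this dominance in hand, the remaining terms collapse to $O(mnr + n)$, matching the claim. I would close by flagging that the word ``approximated'' in the statement signals the deliberate suppression of the bounded constants $k$, $I$, and $w$, and that the entire argument rests on the hypothesis that the effective rank $r$ is small, which is exactly what makes treating $k$ as a constant legitimate.
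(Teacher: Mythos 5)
Your overall skeleton matches the paper's proof: both decompose the runtime into the sparse SVD (via primme, $O(mnr)$), the KMeans step ($O(n t_K k)$, collapsing to $O(n)$ once $k$ and the iteration count are treated as bounded), and the agglomerative step, then sum and simplify. The divergence --- and the gap --- is in how you dispose of the agglomerative term. Your primary argument prices it at the naive $O(m^2 k)$ and then kills it with the claim $m^2 k \le mnr$, which requires $m \le n$. That inequality is not among the hypotheses of Lemma~\ref{lemma1}, and it is not harmless in this setting: the paper's own Lectures dataset (Table~\ref{tabD}) averages $392$ sentences against $342$ distinct words after preprocessing, i.e., $m > n$, precisely because long conversational documents recycle a limited vocabulary. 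So the step ``in the single-document regime the word count dominates the sentence count'' would fail on data the lemma is explicitly meant to cover.

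The paper avoids the issue entirely: it charges the constrained agglomerative clustering at $O(t_A m \log m)$ using the efficient priority-queue implementation \cite{manning2008introduction}, and then notes $O(mnr) \gg O(m \log m)$ --- a dominance that needs only $\log m \ll nr$, with no relation between $m$ and $n$ assumed. You in fact mention this $O(k\, m \log m)$ bound parenthetically; promoting it from a parenthesis to the main line of the argument (and dropping the $m \le n$ claim) repairs the proof and makes it coincide with the paper's. Your other deviations --- accounting for the $O(mn)$ preprocessing cost and the $O(n\log n)$ tf-idf sort, which the paper leaves implicit --- are fine and strictly lower order.
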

\begin{proof}
In the KMeans clustering procedure, all $n$ word vectors are compared to $k$ centroids to find the closest centroid, and this step iterates $t_K$ times, leading to the time complexity of $O(nt_Kk)$. In the constrained agglomerative procedure, the similarities between $m$ sentence vectors are computed for clustering, and with $t_A$ iterations, the time complexity is $O(t_A m \log m)$ with the efficient priority queue implementation \cite{manning2008introduction}. For the matrix decomposition, as discussed above, using an iterative method such as primme leads to a time complexity of $O(mnr)$. The overall time complexity of our method is the sum of all these procedures, which leads to
\vspace{-2mm}
\begin{equation} \label{eq_comp}
    O\left(mnr + n t_K k + t_A m \log m\right).
    \vspace{-2mm}
\end{equation}

Since $O(mnr) \gg O(m \log m)$, $k \ll n$, and $r \ll m$ by assumption of a low-rank sentence-word matrix, this complexity can be approximated as $O(mnr + n)$.
\vspace{-3mm}
\end{proof}

Thus, BATS has a much lower time complexity than $O(\max(n^3, m^3))$, with the difference being particularly pronnounced when $n \gg m$. The scalability and small runtimes of BATS will be verified experimentally in Section \ref{ssec:scala}.

\section{Experimental Evaluation and Discussion}\label{sec:experiment}
\vspace{-.6mm}
We turn now to evaluating our BATS methodology. After describing the datasets (Section \ref{ssec:datasets}), we consider performance against baselines on the topic modeling (Section \ref{ssec:res-topic}) and text segmentation (Section \ref{ssec:res-seg}) tasks. Then, we consider the scalability of our method (Section \ref{ssec:scala}). Finally, we conduct an ablation study to determine the impact of various components of BATS (Section \ref{ssec:ablation}). 

\vspace{-2mm}
\subsection{Description of Datasets}
\label{ssec:datasets}
\vspace{-.6mm}
We consider documents from \qq{six} datasets -- Textbook, Lectures, Introductions, Choi, \qq{Wiki, and News} -- obtained from different text applications. Basic statistics on these datasets are given in Table \ref{tabD}, including the number of documents, the average sentences per document, the average word counts per document, and the average sparsity per document (fraction of zero entries in the $X^a$ matrix), both before and after the preprocessing procedures described in Section \ref{ssec:preprocessing}. More specifics on these datasets are as follows:

\noindent \textit{(i) Textbook dataset:} This is drawn from the medical textbook in \cite{walker1990oral}. Each chapter is treated as a document, and each section as a segment. The numbers of segments per document and sentences per segment have high variance. Moreover, segments within a document tend to be similar in their constituent words, as they are different sections of the same chapter. As a result, this dataset helps us test on cases where documents have different segments discussing similar topics.

\bgroup
\begin{table*}[t]
\vspace{-4mm}
\centering
\captionsetup{font=small}
\caption{Basic statistics of the \qq{six} datasets used for evaluation.}
\vspace{-4mm}
\def\arraystretch{1.1}
\small
\begin{center}
\resizebox{\linewidth}{!}{\begin{tabular}{l|c|c|c|c|c|c|c|c|c}
    \hline
    \hline
    Dataset & Documents & Avg. sentences & Avg. segments & Avg. words & Avg. words & Avg distinct words & Avg. distinct words & Avg. sparsity & Avg. sparsity \\
        & & per doc & per doc & before preproc. & after preproc. & before preproc. & after preproc. & before preproc. & after preproc. \\
    \hline
    Textbook & 227 & 136 & 4 & 3551 &  2590 & 640 & 241 & 97.9\% & 75.2\%\\
    Lectures & 55 & 392 & 8 & 8002 &  4924 & 686 & 342 & 99.0\% & 89.0\%\\
    Introductions & 2135 & 195 & 5 & 7022 &  4752 & 1016 & 449 & 98.6\% & 84.8\%\\
    Choi & 920 & 74 & 10 (const) & 1673   & 1489  & 650 & 162 & 98.0\% & 76.5\% \\
    \qq{Wiki} & 727,000 & 60 & 4 & 1154 & 1007 & 521 & 334 & $96.3\%$ & $89.6\%$\\ 
    \qq{News} & 300,000 & 51 & 4 &  1096& 730 & 516 & 321 & $97.1\%$& $87.2\%$ \\
    \hline
\end{tabular}}
\label{tabD}
\vspace{-1mm}
\end{center}
\end{table*}
\egroup


\noindent \textit{(ii) Lectures dataset:} This dataset contains transcripts of conversational lectures on AI and physics topics~\cite{eisenstein2008bayesian}.
As each lecture is divided into sections, we treat lectures as documents and sections as segments. Each lecture script has 6-10 sections. Compared with the other datasets, the sentences are more conversational, tending to be shorter and simpler. Therefore, this dataset helps us examine algorithm performance on lengthy conversational documents.

\noindent \textit{(iii) Introductions dataset:} In this dataset, every document is an artificial combination of abstracts and introductions from academic articles in different fields~\cite{statistical_models}. We randomly choose 3-8 articles, extract the abstract and introduction as one sample, and combine multiple samples into one document. Each sample is treated as one segment. Compared with the other datasets, this will allow us to test on cases with large segment sizes, uneven segment lengths, and a diverse set of topics. 

\noindent \textit{(iv) Choi dataset:} This is a standard dataset \cite{choi2000advances} widely used to evaluate text segmentation approaches. The documents in the dataset are artificial combinations of the first $\ell$ sentences of the documents in the Brown corpus \cite{francis79browncorpus}. Each document has 10 segments, with few sentences per segment. Because the dataset lacks explicit topic distributions and contains mostly segments that are too short for topic modeling, we use it only for evaluating text segmentation.

\noindent \qq{\textit{(v) Wiki dataset:} This is the WIKI-727k dataset from~\cite{koshorek2018text} which is comprised of over 727,000 articles from English Wikipedia. Its vocabulary size exceeds 800,000 tokens. We treat articles as documents and, following~\cite{koshorek2018text}, use the Punkt Sentence Tokenizer from the \texttt{nltk} library \cite{bird2009natural} to generate ground truth segments. We include this dataset primarily to test the performance of BATS on large corpora, although it also provides a new genre of natural text covering a wide variety of encyclopedia topics.}

\noindent \qq{\textit{(vi) News dataset:} Finally, we include a news dataset generated by following links shared on Twitter. We use a collection of Tweets from \cite{li2017world} that focus on news related to the 2016 US presidential election and follow the shared links to scrape text from the linked articles. The news dataset consists of about 300,000 documents with a vocabulary size of over 80,000. Each article is treated as a document and ground truth segments are generated using the Punkt Sentence Tokenizer.}



\vspace{-2mm}
\subsection{Topic Modeling}
\label{ssec:res-topic}
\vspace{-.6mm}


\subsubsection{Baselines}
\label{ssec:app_topicbase}
We compared BATS against six baselines for topic modeling:

\noindent \textit{(i) Latent Dirichlet Allocation (LDA)} \cite{blei2003latent,sarne2019unsupervised}: LDA is a probabilistic topic model which uses two independent Dirichlet priors for the document-topic and word-topic distributions. It trains a model to best estimate the Bayesian probabilities $P(word|topic)$ and $P(topic|document)$. We use the {\tt sklearn} implementation in Python with the default parameters.

\noindent \textit{(ii) Hierarchical Dirichlet Process (HDP)} \cite{teh2005sharing}: HDP is a mixed-membership model which extends LDA to an unknown number of topics by building a hierarchy. 
Specifically, it builds a two-level hierarchical Dirichlet process at the document-level and the word-level to perform parameter inference. We use the {\tt gensim} implementation in Python with the default parameters.

\begin{figure}[t]
\vspace{-2mm}
\captionsetup{font=small}
\centerline{\includegraphics[scale=0.3]{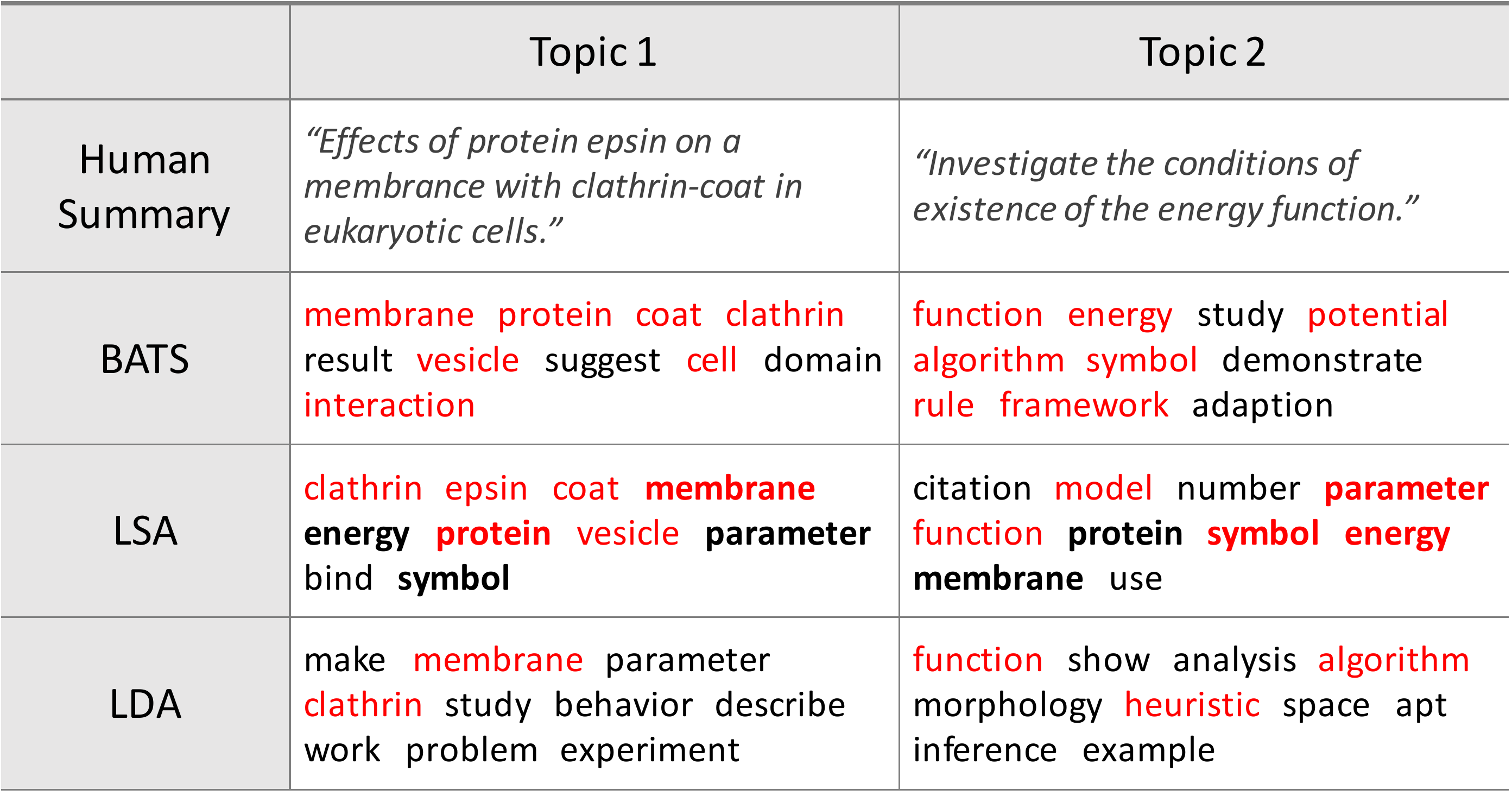}}
\vspace{-0.2cm}
\caption{Example of topics extracted from an arbitrary document in the Introductions dataset. Words in color red are those consistent with a human-generated summary, and duplicated words are boldfaced. Our results produce the best descriptions as well as the least overlaps.}
\label{fig_word}
\vspace{-4mm}
\end{figure}
\begin{figure}[t]
\centerline{\includegraphics[scale=0.35]{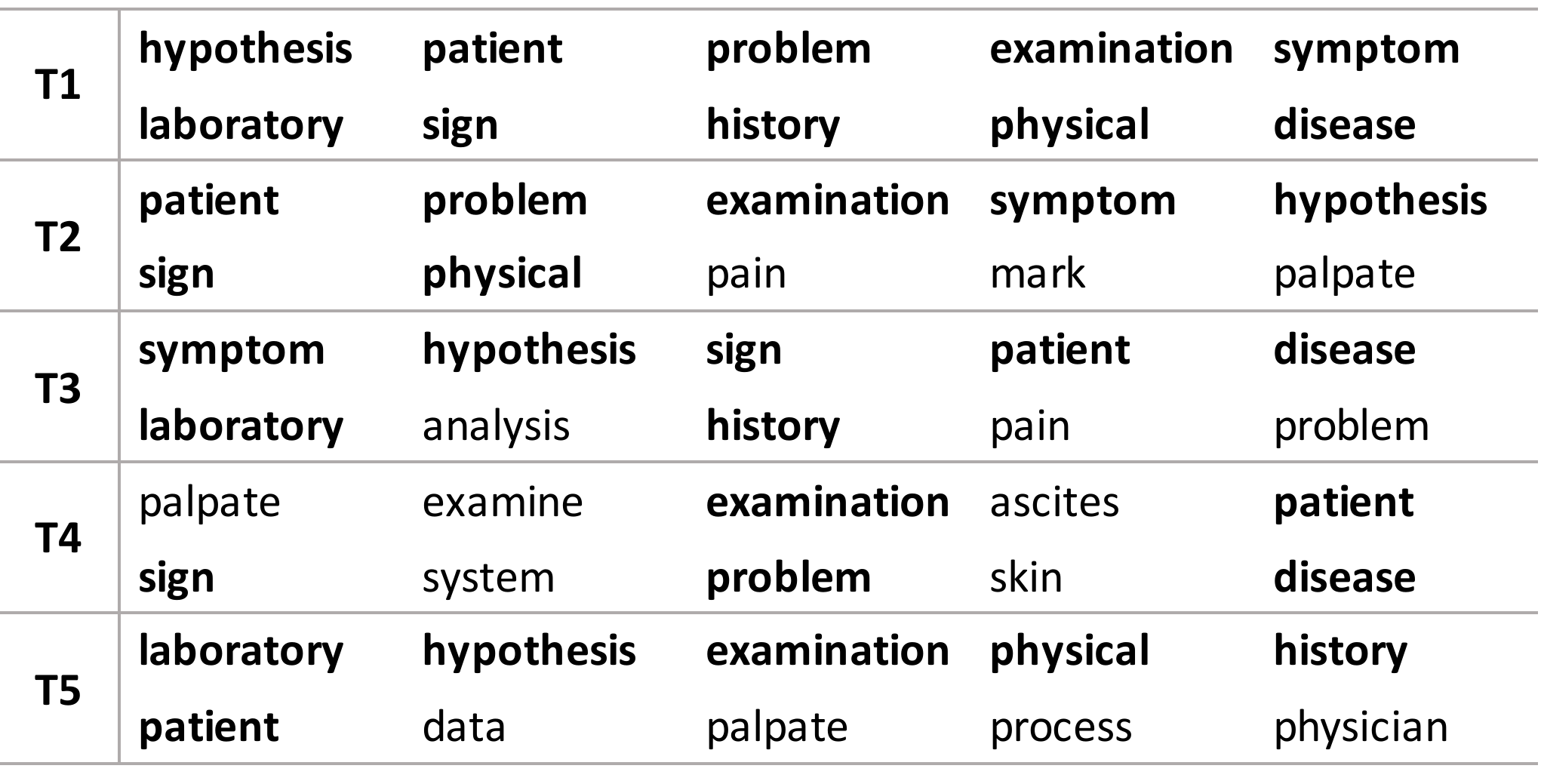}}
\captionsetup{font=small}
\vspace{-0.4cm}
\caption{Example of topics extracted from one document (known to have five topics) in the Textbook dataset by LSA. Duplicated words are denoted in boldface. There is high overlap, motivating the need to consider topic diversity in addition to coherence.}
\label{fig_dup_topic}
\end{figure}

\noindent \textit{(iii) Latent Semantic Analysis (LSA)} \cite{deerwester1990indexing}: LSA decomposes a document-word matrix, based on TF-IDF scores, into a document-topic matrix and a topic-word matrix; the decomposition is performed through a truncated SVD technique. We use the {\tt gensim} implementation in Python.

\noindent \textit{(iv) Probabilistic Latent Semantic Analysis (pLSA)} \cite{hofmann2017probabilistic}: pLSA is developed from LSA, using a probabilistic method instead of SVD to find the latent topics via generative modeling of the observed document-word matrix. We implement pLSA de-novo in Python, using 30 as the max number of iterations, 10.0 as the breaking threshold, and $k$ as the number of topics.

\noindent \textit{(v) Non-negative Matrix Factorization (NMF)} \cite{pauca2004text}: NMF is a linear-algebraic model which factorizes a high-dimensional matrix into two lower-dimensional ones. In this  case, NMF decomposes the document-word matrix (based on TF-IDF scores) into a topic matrix and a coefficient matrix for the topics. We use the {\tt sklearn} implementation in Python with the default parameters.

\noindent \textit{(vi) Semantics-assisted Non-negative Matrix Factorization (SeaNMF)} \cite{shi2018short}: SeaNMF introduces word-context semantic correlations into NMF to extract topics particularly from short texts. The semantic correlations between the words and their contexts are learned from the Skip-Gram with Negative Sampling (SGNS) word embedding technique \cite{mikolov2013efficient, mikolov2013distributed} to address sparsity. The objective function of SeaNMF preserves both the word-document matrix and semantic correlation matrix. We use the author's Python implementation available at \url{https://github.com/tshi04/SeaNMF}.

Since our focus is on single document topic modeling, we evaluate the models on each document separately. Given that the baselines usually learn across multiple documents, to provide a fair comparison, we treat the sentences within each document as the ``documents'' for the baselines, i.e., we feed them the preprocessed sentence-word matrices. For each document, the number of topics assumed by each baseline is taken to be the number of segments. The performance of each baseline is averaged over several trials.

\vspace{-2mm}
\subsubsection{Evaluation metrics}
We employ two popular coherence metrics  to  assess  extracted  topic: pointwise mutual information (PMI) \cite{fano1961transmission} and UMass \cite{mimno2011optimizing}. Higher values of these metrics have been associated with better performance in terms of interpretability and consistency of topics with human evaluation \cite{roder2015exploring}. Since these metrics treat topics separately, in order to evaluate the diversity between topics, we also include two similarity measures: Jaccard (Jacc) Index and S\o rensen-Dice (Dice) Index \cite{gomaa2013survey}. They measure overlaps in words between the topics, with lower values (i.e., less overlap) being better. More specifically:

\begin{table}[t!]
\centering
\vspace{-2mm}
\captionsetup{font=small}
\caption{The PMI values with the varying number of words in each topic ($\mathcal{K}$) on three datasets.}
\vspace{-3.5mm}
\begin{adjustbox}{width=0.6\textwidth}
\begin{tabular}{l|l|l|l}
\hline
$\mathcal{K}$   & Textbook Dataset & Lectures Dataset & Introduction Dataset \\ \hline
2  & 2.13             & 1.25             & 0.72                 \\
5  & 1.45             & 0.92             & 0.64                 \\
10 & 0.62             & 0.54             & 0.58                 \\
20 & -0.12            & -0.51            & -0.33                   \\ \hline
\end{tabular}
\end{adjustbox}
\label{tab:changing_K}
\end{table}

\noindent \textit{(i) Topic coherence measures:} We use the PMI score, \qq{an external} metric which takes the top-$\mathcal{K}$ words under each topic into consideration and has been widely used in recent papers for evaluating topic coherence~\cite{hajjem2017combining,li2019integration,nikolenko2016topic,quan2015short,schneider2018topic,shi2018short,yao2017incorporating,levy2014neural}. Formally, for each topic $l$, the PMI score is calculated as
\vspace{-1mm}
\begin{equation}
\label{eq:PMI}
    PMI_l = \frac{2}{\mathcal{K}(\mathcal{K}-1)} \sum_{1 \leq i <j\leq \mathcal{K}}\log \frac{p(w_i, w_j)}{p(w_i)p(w_j)},
    \vspace{-1mm}
\end{equation}
where $\mathcal{K}$ is the number of words from topic $l$ that are considered, $p(w_i,w_j)$ is the number of documents containing both of the words $w_i$ and $w_j$ divided by the number of documents in the dataset, and $p(w_i)$ and $p(w_j)$ are the number of document containing the words $w_i$ and $w_j$ divided by the total number of documents, respectively. When selecting the $\mathcal{K}$ words from topic $l$, if the topic modeling algorithm orders its results (e.g., from most likely to be part of the topic to least likely) then the top $\mathcal{K}$ words are used. The average PMI score over all the topics is then used to evaluate the quality of the topic models. Following~\cite{shi2018short, schneider2018topic}, we set the default value of $\mathcal{K}$ to 10. \qq{We calculate $p(w_i, w_j)$,$p(w_i)$, and $p(w_j)$ over the entire dataset we are evaluating.}

By contrast, UMass is an intrinsic evaluation metric which takes the sequence of words into consideration by computing the conditional log-probability of each pair of words; the pairwise scores are not symmetric, and therefore the order of the words matters. The UMass score is given as 
\begin{equation}
\label{eq:UMass}
    UMass_l = \sum_{j=2}^{\mathcal{K}} \sum_{i = 1}^{j-1}\log \frac{p(w_i, w_j) + 1}{p(w_i)},
    \vspace{-1mm}
\end{equation}
where $\mathcal{K}$, $p(w_i, w_j)$, and $p(w_i)$ have the same meaning as in \eqref{eq:PMI}. Note that this equation assumes the top-$\mathcal{K}$ words have been ordered from most to least likely to be part of topic $l$. In our single-document evaluation, we consider the internal corpus for UMass to be the document itself. \qq{Note that we include both an external (PMI) and an internal (Umass) metric for completeness to ensure that our evaluation does not give an unfair advantage to any particular method.}

\noindent \textit{(ii) Similarity score measures:} With $T_i$ and $T_j$ as the sets of words comprising topics $i$ and $j$, the Jacc $Jacc(T_i, T_j)$ and Dice $Dice(T_i, T_j)$ similarity scores are computed as:
\begin{equation} \label{eq_jc}
    Jacc(T_i, T_j) = \frac{|T_i \cap T_j|}{|T_i \cup T_j|}, \qquad Dice(T_i, T_j) = \frac{2|T_i \cap T_j|}{|T_i| + |T_j|}.
    \vspace{-1mm}
\end{equation}
The example in Figure \ref{fig_word} shows the importance of considering both types of metrics. The topics extracted by the LSA baseline tend to have many duplicated words (50\% in the example) as compared with results from LDA and BATS, even though it has roughly the same number of words that are consistent with a human-generated summary as our method. Further, since the overall scores for each document are averaged across topics, poor results in terms of one metric on any given topic can be outweighed by high performance on other topics. Since the overall topic coherence scores for each document are averaged across topics, similar topics with duplicate words and high coherence scores will achieve a high average score. Figure \ref{fig_dup_topic} shows another example of this for LSA: though this method achieves high topic coherence, the topics are highly overlapped, motivating the need to take diversity into consideration.

\begin{figure}[t]
\centering
\subfloat [Average PMI score on the \\ Introduction dataset.]{
  \includegraphics[width=.30\linewidth]{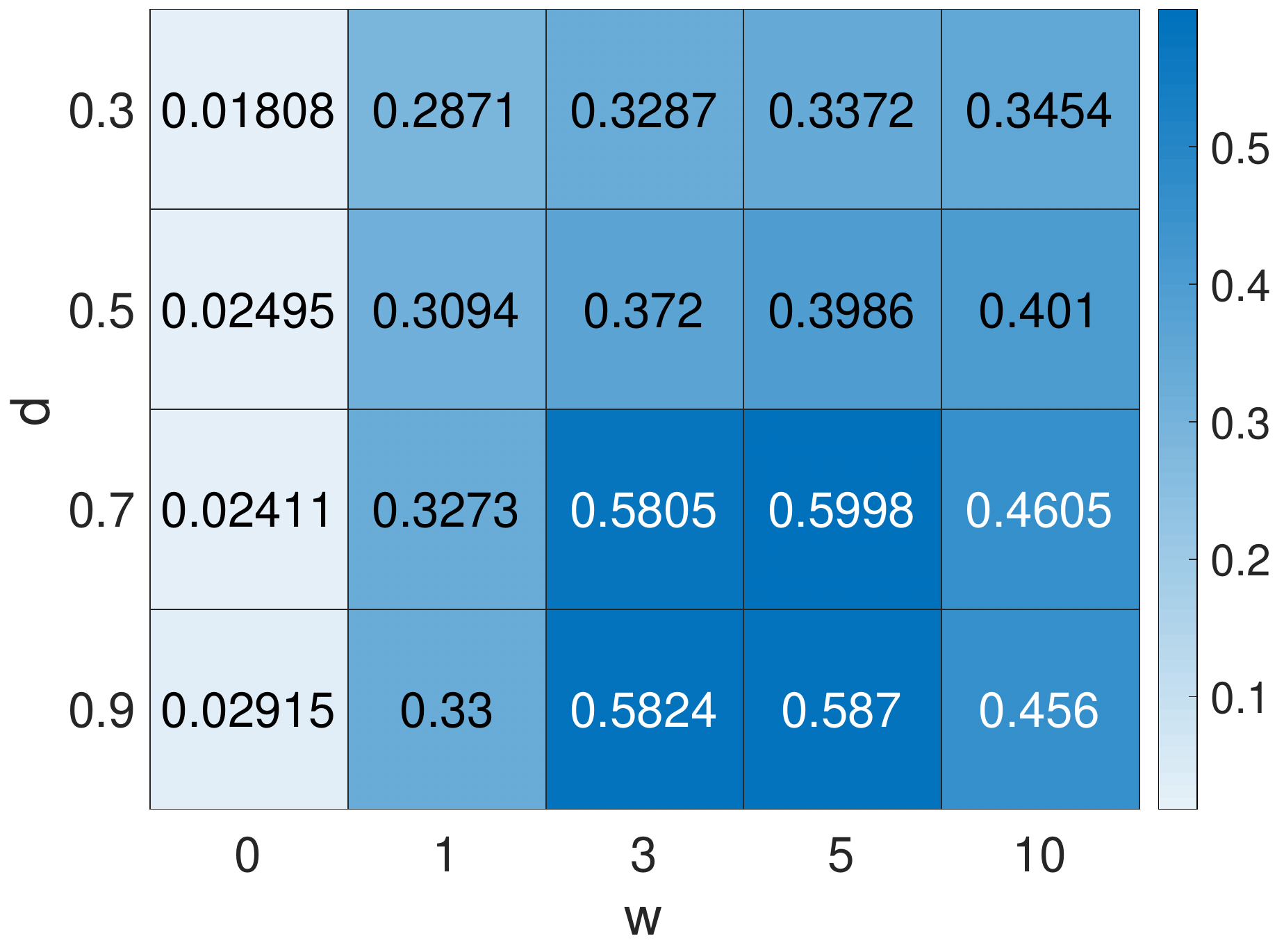}
  \label{fig:pmi_introduction}
} \hspace{1mm}
\subfloat [Average PMI score on the \\ Lectures dataset.]{
  \includegraphics[width=.31\linewidth]{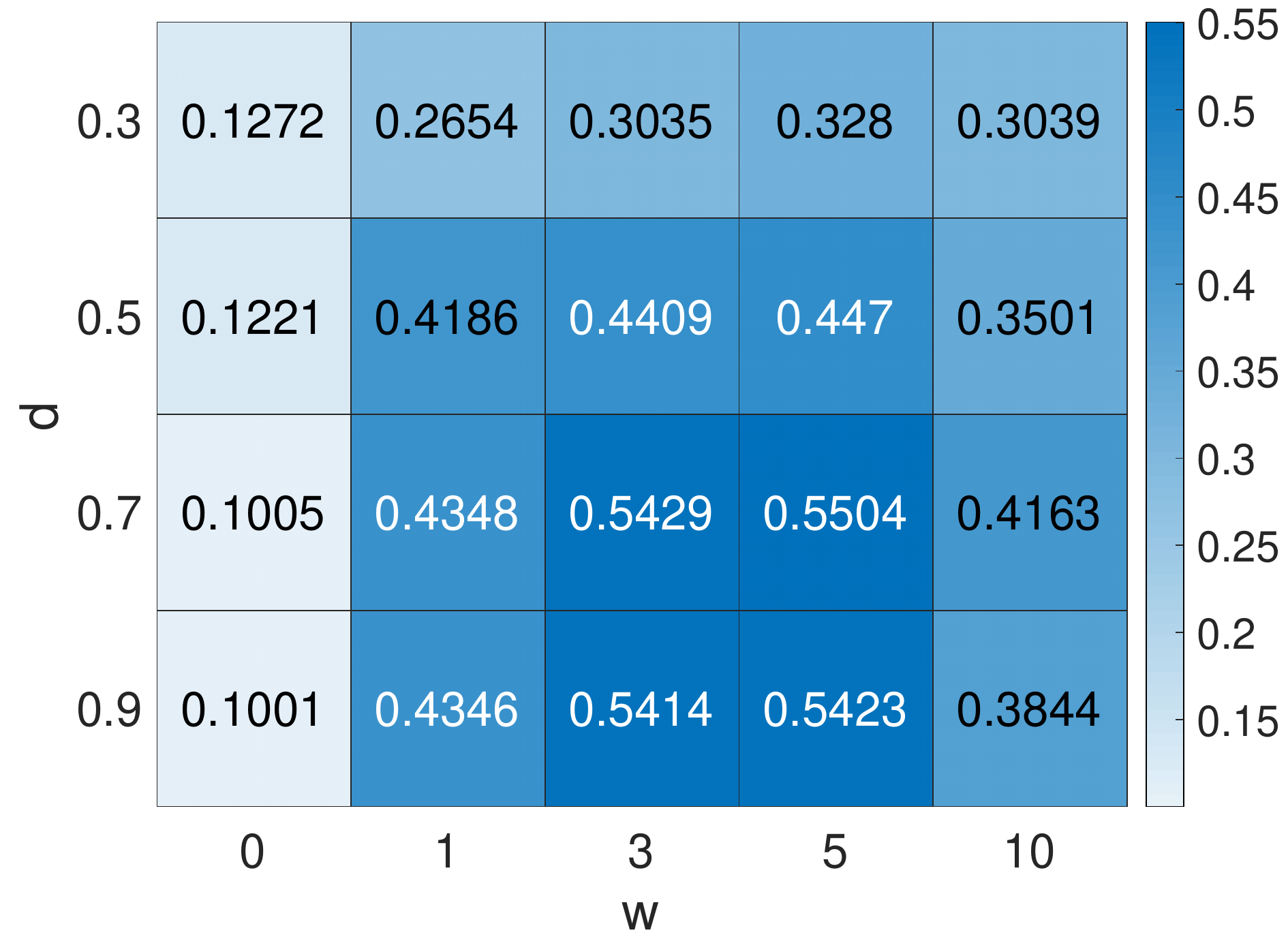}
  \label{fig:pmi_lectures}
} \hspace{1mm}
\subfloat [Average PMI score on the \\ Textbook dataset.]{
  \includegraphics[width=.31\linewidth]{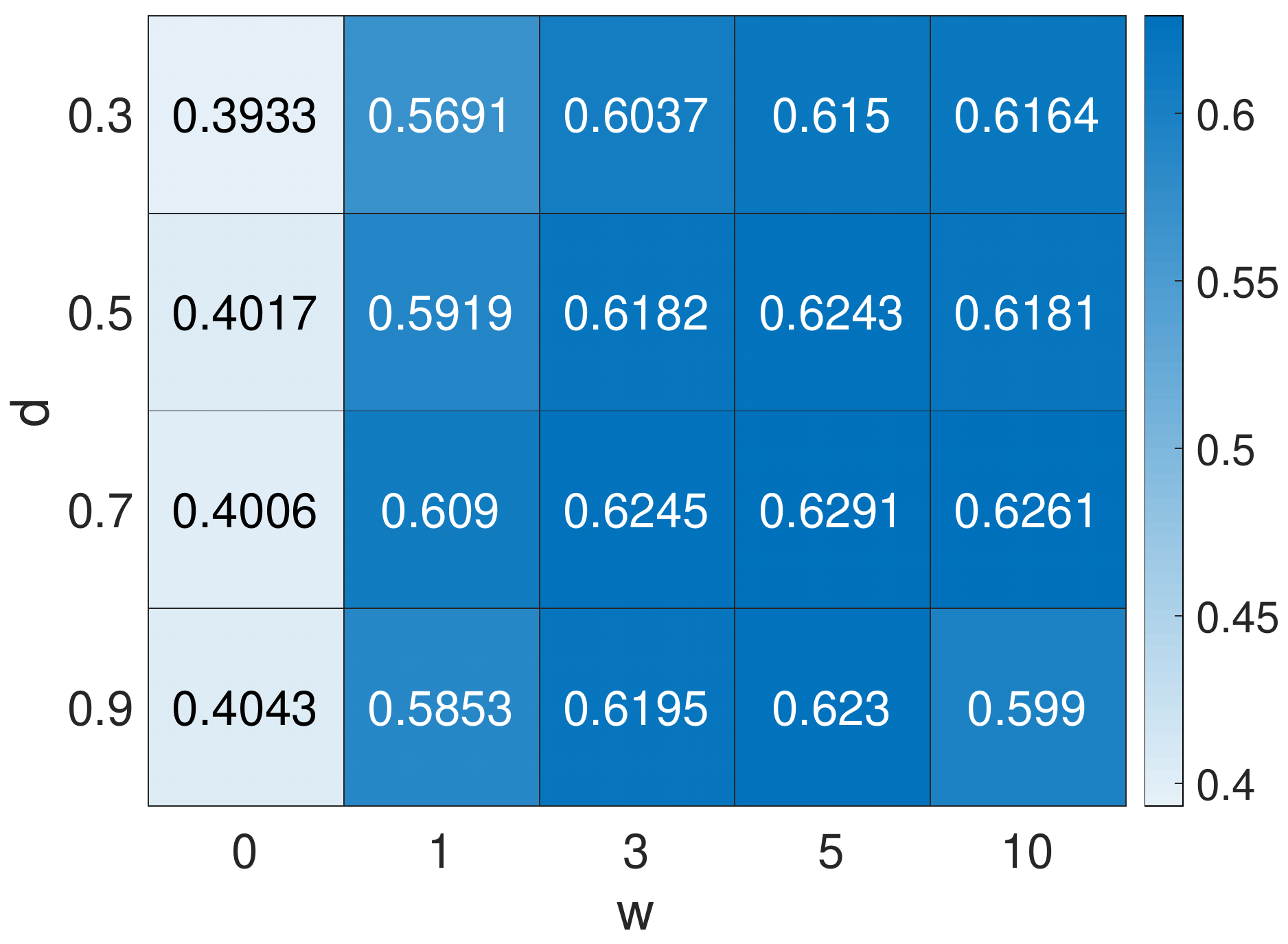}
  \label{fig:pmi_clinical}
}%
\vspace{-4mm}
\captionsetup{font=small}
  \caption{Heatmaps for average PMI scores across the the documents in Introduction, Lectures and Textbook dataset with varying $w$ and $d$. We see that $w = 5, d = 0.7$ leads to the highest PMI scores, consistent with our choices from visually inspecting Figure \ref{fig:wd_svd_heatmap}.}
  \label{fig:wd_PMI}
\end{figure}

We note that because BATS prevents overlapping words by design, it will achieve a similarity score of $0$ for both Jacc and Dice. For this reason, we will not use these metrics to claim that BATS has the ``best" performance on the topic modeling task (the difference between ``low overlap'' and ``no overlap'' is not likely important). Instead, we include Jacc and Dice for BATS and the baselines to provide a more holistic picture of the baselines. As some baselines will have a high Jacc or Dice score as well as strong performance on other metrics, these scores highlight a potential tradeoff between low overlap and high performance on the coherence metrics.
As a result, we also define composite metrics for evaluation which penalize the coherence scores on pairs of topics according to the similarity scores. Specifically, using ${\tt PMI}_i$ and ${\tt UMass}_i$ as the coherence scores for topic $i$ and ${\tt sim}_{i,j}$ as the similarity score between topics $i$ and $j$ according to Jacc or Dice, we compute:
\vspace{-1mm}
\begin{equation} \label{eq_pmisim}
    PMI^{sim} = \frac{\sum^k_{i=1}\sum^k_{j=1}({\tt PMI}_i + {\tt PMI}_j)(1-(\sgn({\tt PMI}_i + {\tt PMI}_j){\tt sim}_{i,j}))/2}{k^2},
\end{equation}
\begin{equation} \label{eq_umsim}
    UMass^{sim} = \frac{\sum^k_{i=1}\sum^k_{j=1}({\tt UMass}_i + {\tt UMass}_j)(1+{\tt sim}_{i,j})/2}{k^2},
\end{equation}
where $k$ refers to the total number of topics and $\sgn(x)$ is the sign function that evaluates to 1 if $x \geq 0$ and -1 if $x < 0$. \eqref{eq_pmisim} uses the $\sgn$ function to ensure that pairs of topics with high similarity scores are penalized regardless of the sign of the sum of their PMI scores. Because the UMass scores are always negative, we can simplify this penalty to $1+{\tt sim}_{i,j}$ in \eqref{eq_umsim}.

\vspace{-2mm}
\subsubsection{Hyperparameter analysis}
\label{sssec:modeling_hyperparameter}
In Section \ref{sssec:SVD_heatmap}, we determined values for $w$ and $d$ based on visually inspecting heatmaps of pairwise distances between the SVD sentence vectors. We now test these parameters against the PMI metric to validate our choices on the topic modeling task. Looking at Table \ref{fig:wd_PMI}, we can see that the best choices of $w$ and $d$ are (i) consistent with our earlier analysis (i.e., $d \approx 0.7$, $w = 3$ or $5$) and (ii) reasonably consistent across datasets. This validates the idea that BATS can be fully unsupervised, where we do not have an initial training procedure for hyperparameters.

Additionally, for completeness, we study the impact of changing $\mathcal{K}$ in the PMI score across the Textbook, Lectures, and Introduction datasets for BATS. The results are shown in Table \ref{tab:changing_K}. We find that increasing $\mathcal{K}$ decreases the PMI score: this makes intuitive sense because as the number of words in the topics increases, less relevant words should be included, if the words within each topic are ranked effectively. In BATS, our ranking is done according to the TF-IDF scores.

\vspace{-2mm}
\subsubsection{Results and discussion}
\label{sssec:modeling_discussion}
The results obtained by each algorithm on the five datasets are given in Table \ref{tab14}. We present the mean and standard deviations on topic diversity, topic coherence, and the four cases of joint metrics. The first two columns, Jacc and Dice, indicate the diversities of the topics (smaller being better). The following columns then give the topic coherence scores, PMI and UMass (larger being better), followed by their combinations with the similarity measures (e.g., $PMI^{Dice}$ is PMI with Dice used for ${\tt sim}_{ij}$ in \eqref{eq_pmisim}).

\bgroup
\def\arraystretch{1}
\begin{table*}[t]
\small
\begin{center}
  \captionsetup{font=small}
  \caption{\small{Performance of each algorithm on the Textbook, Lectures, Introductions, \qq{Wiki, and News} datasets in terms of topic coherence, similarity, and composite metrics. Our algorithm has the highest performance on most of the metrics, indicating it achieves the best balance between topic coherence and diversity.}}
\vspace{-3mm}
\resizebox{1.0\linewidth}{!}
{
\begin{tabular}{ccccccccc}
\hline
\multicolumn{9}{c}{Textbook Dataset} \\ \hline
\multicolumn{1}{c|}{} &
  \multicolumn{1}{c|}{Jacc} &
  \multicolumn{1}{c|}{Dice} &
  \multicolumn{1}{c|}{PMI} &
  \multicolumn{1}{c|}{$PMI^{Jacc}$} &
  \multicolumn{1}{c|}{$PMI^{Dice}$} &
  \multicolumn{1}{c|}{UMass} &
  \multicolumn{1}{c|}{$UMass^{Jacc}$} &
  $UMass^{Dice}$ \\ \hline
\multicolumn{1}{c|}{LDA} &
  \multicolumn{1}{c|}{$\mathbf{0.00 \pm 0.00}$} &
  \multicolumn{1}{c|}{$\mathbf{0.00 \pm 0.00}$} &
  \multicolumn{1}{c|}{$-0.71 \pm 0.91$} &
  \multicolumn{1}{c|}{$-0.71 \pm 0.91$} &
  \multicolumn{1}{c|}{$-0.71 \pm 0.91$} &
  \multicolumn{1}{c|}{$-14.74 \pm 2.78$} &
  \multicolumn{1}{c|}{$-14.74 \pm 2.78$} &
  $-14.74 \pm 2.78$ \\ \hline
\multicolumn{1}{c|}{HDP} &
  \multicolumn{1}{c|}{$0.01 \pm 0.01$} &
  \multicolumn{1}{c|}{$0.02 \pm 0.02$} &
  \multicolumn{1}{c|}{$-6.18 \pm 0.86$} &
  \multicolumn{1}{c|}{$-6.24 \pm 1.04$} &
  \multicolumn{1}{c|}{$-6.29 \pm 1.05$} &
  \multicolumn{1}{c|}{$-22.30 \pm 0.61$} &
  \multicolumn{1}{c|}{$-22.52 \pm 0.68$} &
  $-22.72 \pm 0.80$ \\ \hline
\multicolumn{1}{c|}{LSA} &
  \multicolumn{1}{c|}{$0.28 \pm 0.10$} &
  \multicolumn{1}{c|}{$0.42 \pm 0.13$} &
  \multicolumn{1}{c|}{$\mathbf{0.65 \pm 0.89}$} &
  \multicolumn{1}{c|}{$0.36 \pm 0.73$} &
  \multicolumn{1}{c|}{$0.22 \pm 0.72$} &
  \multicolumn{1}{c|}{$-8.11 \pm 2.21$} &
  \multicolumn{1}{c|}{$-10.38 \pm 2.96$} &
  $-11.49 \pm 3.30$ \\ \hline
\multicolumn{1}{c|}{pLSA} &
  \multicolumn{1}{c|}{$0.10 \pm 0.09$} &
  \multicolumn{1}{c|}{$0.14 \pm 0.12$} &
  \multicolumn{1}{c|}{$-3.63 \pm 1.52$} &
  \multicolumn{1}{c|}{$-3.97 \pm 1.63$} &
  \multicolumn{1}{c|}{$-4.14  \pm 1.72$} &
  \multicolumn{1}{c|}{$-14.65 \pm 2.82$} &
  \multicolumn{1}{c|}{$-16.06 \pm 3.34$} &
  $-16.78 \pm 3.77$ \\ \hline
\multicolumn{1}{c|}{NMF} &
  \multicolumn{1}{c|}{$0.21 \pm 0.10$} &
  \multicolumn{1}{c|}{$0.31 \pm 0.13$} &
  \multicolumn{1}{c|}{$-1.69\pm 1.41$} &
  \multicolumn{1}{c|}{$-1.45\pm 1.19$} &
  \multicolumn{1}{c|}{$-1.32 \pm 1.11$} &
  \multicolumn{1}{c|}{$-13.41 \pm 2.82$} &
  \multicolumn{1}{c|}{$-15.94 \pm 3.99$} &
  $-17.22 \pm 4.28$ \\ \hline
 \multicolumn{1}{c|}{SeaNMF}&
  \multicolumn{1}{c|}{$0.02 \pm 0.05$} &
  \multicolumn{1}{c|}{$0.05 \pm 0.08$} &
  \multicolumn{1}{c|}{$0.26 \pm 0.65$} &
  \multicolumn{1}{c|}{$0.24 \pm 0.64$} &
  \multicolumn{1}{c|}{$0.23 \pm 0.63$} &
  \multicolumn{1}{c|}{$\mathbf{-6.64 \pm 1.81}$} &
  \multicolumn{1}{c|}{$\mathbf{-6.85 \pm 2.06}$} &
  \multicolumn{1}{c}{$\mathbf{-7.01 \pm 2.24}$} \\ \hline
\multicolumn{1}{c|}{BATS} &
  \multicolumn{1}{c|}{$\mathbf{0.00 \pm 0.00}$} &
  \multicolumn{1}{c|}{$\mathbf{0.00 \pm 0.00}$} &
  \multicolumn{1}{c|}{$0.62 \pm 0.63$} &
  \multicolumn{1}{c|}{$\mathbf{0.62 \pm 0.63}$} &
  \multicolumn{1}{c|}{$\mathbf{0.62 \pm 0.63}$} &
  \multicolumn{1}{c|}{$-10.28 \pm 2.31$} &
  \multicolumn{1}{c|}{$-10.28 \pm 2.31$} &
  $-10.28 \pm 2.31$ \\ \hline
\multicolumn{9}{c}{Lectures Dataset} \\ \hline
\multicolumn{1}{c|}{} &
  \multicolumn{1}{c|}{Jacc} &
  \multicolumn{1}{c|}{Dice} &
  \multicolumn{1}{c|}{PMI} &
  \multicolumn{1}{c|}{$PMI^{Jacc}$} &
  \multicolumn{1}{c|}{$PMI^{Dice}$} &
  \multicolumn{1}{c|}{UMass} &
  \multicolumn{1}{c|}{$UMass^{Jacc}$} &
  $UMass^{Dice}$ \\ \hline
\multicolumn{1}{c|}{LDA} &
  \multicolumn{1}{c|}{$\mathbf{0.00 \pm 0.00}$} &
  \multicolumn{1}{c|}{$\mathbf{0.00 \pm 0.00}$} &
  \multicolumn{1}{c|}{$-0.71 \pm 0.85$} &
  \multicolumn{1}{c|}{$-0.71 \pm 0.85$} &
  \multicolumn{1}{c|}{$-0.71 \pm 0.85$} &
  \multicolumn{1}{c|}{$-14.60 \pm 2.52$} &
  \multicolumn{1}{c|}{$-14.60 \pm 2.52$} &
  $-14.60 \pm 2.52$ \\ \hline
\multicolumn{1}{c|}{HDP} &
  \multicolumn{1}{c|}{$0.01 \pm 0.01$} &
  \multicolumn{1}{c|}{$0.01 \pm 0.01$} &
  \multicolumn{1}{c|}{$-6.89 \pm 0.69$} &
  \multicolumn{1}{c|}{$-6.28 \pm 0.76$} &
  \multicolumn{1}{c|}{$-6.32 \pm 0.77$} &
  \multicolumn{1}{c|}{$-21.55 \pm 0.34$} &
  \multicolumn{1}{c|}{$-21.72 \pm 0.37$} &
  $-21.86 \pm 0.42$ \\ \hline
\multicolumn{1}{c|}{LSA} &
  \multicolumn{1}{c|}{$0.27 \pm 0.07$} &
  \multicolumn{1}{c|}{$0.41 \pm 0.09$} &
  \multicolumn{1}{c|}{$\mathbf{0.59 \pm 0.62}$} &
  \multicolumn{1}{c|}{$0.35 \pm 0.53$} &
  \multicolumn{1}{c|}{$0.24 \pm 0.49$} &
  \multicolumn{1}{c|}{$\mathbf{-7.13 \pm 1.76}$} &
  \multicolumn{1}{c|}{$-8.97 \pm 2.12$} &
  $-9.92 \pm 2.34$ \\ \hline
\multicolumn{1}{c|}{pLSA} &
  \multicolumn{1}{c|}{$0.04 \pm 0.03$} &
  \multicolumn{1}{c|}{$0.07 \pm 0.04$} &
  \multicolumn{1}{c|}{$-6.4 \pm 0.77$} &
  \multicolumn{1}{c|}{$-6.7\pm 0.86$} &
  \multicolumn{1}{c|}{$-6.9 \pm 0.94$} &
  \multicolumn{1}{c|}{$-19.44 \pm 0.78$} &
  \multicolumn{1}{c|}{$-20.27 \pm 1.04$} &
  $-20.90 \pm 1.31$ \\ \hline
\multicolumn{1}{c|}{NMF} &
  \multicolumn{1}{c|}{$0.26 \pm 0.08$} &
  \multicolumn{1}{c|}{$0.39 \pm 0.10$} &
  \multicolumn{1}{c|}{$0.48 \pm 0.52$} &
  \multicolumn{1}{c|}{$0.30 \pm 0.46$} &
  \multicolumn{1}{c|}{$0.21 \pm 0.44$} &
  \multicolumn{1}{c|}{$-9.07 \pm 2.30$} &
  \multicolumn{1}{c|}{$-11.39 \pm 2.84$} &
  $-12.51 \pm 3.10$ \\ \hline
  \multicolumn{1}{c|}{SeaNMF}&
  \multicolumn{1}{c|}{$0.02 \pm 0.03$} &
  \multicolumn{1}{c|}{$0.04 \pm 0.06$} &
  \multicolumn{1}{c|}{$0.26 \pm 0.41$} &
  \multicolumn{1}{c|}{$0.21 \pm 0.40$} &
  \multicolumn{1}{c|}{$0.19 \pm 0.40$} &
  \multicolumn{1}{c|}{$-8.36 \pm 1.16$} &
  \multicolumn{1}{c|}{$\mathbf{-8.49 \pm 1.18}$} &
   $\mathbf{-8.59 \pm 1.20}$ \\ \hline
\multicolumn{1}{c|}{BATS} &
  \multicolumn{1}{c|}{$\mathbf{0.00 \pm 0.00}$} &
  \multicolumn{1}{c|}{$\mathbf{0.00 \pm 0.00}$} &
  \multicolumn{1}{c|}{$0.54 \pm 0.53$}&
  \multicolumn{1}{c|}{$\mathbf{0.54 \pm 0.53}$} &
  \multicolumn{1}{c|}{$\mathbf{0.54 \pm 0.53}$} &
  \multicolumn{1}{c|}{$-9.08 \pm 1.82$} &
  \multicolumn{1}{c|}{$-9.08 \pm 1.82$} &
  $-9.08 \pm 1.82$ \\ \hline
\multicolumn{9}{c}{Introductions Dataset} \\ \hline
\multicolumn{1}{c|}{} &
  \multicolumn{1}{c|}{Jacc} &
  \multicolumn{1}{c|}{Dice} &
  \multicolumn{1}{c|}{PMI} &
  \multicolumn{1}{c|}{$PMI^{Jacc}$} &
  \multicolumn{1}{c|}{$PMI^{Dice}$} &
  \multicolumn{1}{c|}{UMass} &
  \multicolumn{1}{c|}{$UMass^{Jacc}$} &
  $UMass^{Dice}$ \\ \hline
\multicolumn{1}{c|}{LDA} &
  \multicolumn{1}{c|}{$\mathbf{0.00 \pm 0.00}$} &
  \multicolumn{1}{c|}{$\mathbf{0.00 \pm 0.00}$} &
  \multicolumn{1}{c|}{$-2.09 \pm 0.84$} &
  \multicolumn{1}{c|}{$-2.09 \pm 0.84$} &
  \multicolumn{1}{c|}{$-2.09 \pm 0.84$} &
  \multicolumn{1}{c|}{$-15.38 \pm 1.72$} &
  \multicolumn{1}{c|}{$-15.38 \pm 1.72$} &
  $-15.38 \pm 1.72$ \\ \hline
\multicolumn{1}{c|}{HDP} &
  \multicolumn{1}{c|}{$0.01 \pm 0.01$} &
  \multicolumn{1}{c|}{$0.01 \pm 0.02$} &
  \multicolumn{1}{c|}{$-7.13 \pm 1.14$} &
  \multicolumn{1}{c|}{$-7.17 \pm 1.44$} &
  \multicolumn{1}{c|}{$-7.21 \pm 1.14$} &
  \multicolumn{1}{c|}{$-21.78 \pm 1.61$} &
  \multicolumn{1}{c|}{$-21.92 \pm 1.61$} &
  $-22.04 \pm 1.62$ \\ \hline
\multicolumn{1}{c|}{LSA} &
  \multicolumn{1}{c|}{$0.21 \pm 0.09$} &
  \multicolumn{1}{c|}{$0.31 \pm 0.12$} &
  \multicolumn{1}{c|}{$-0.64 \pm 0.73$} &
  \multicolumn{1}{c|}{$-0.84 \pm 0.88$} &
  \multicolumn{1}{c|}{$-0.93 \pm 0.94$} &
  \multicolumn{1}{c|}{$-8.11 \pm 2.03$} &
  \multicolumn{1}{c|}{$-9.88 \pm 2.77$} &
  $-10.77 \pm 3.11$ \\ \hline
\multicolumn{1}{c|}{pLSA} &
  \multicolumn{1}{c|}{$0.04 \pm 0.05$} &
  \multicolumn{1}{c|}{$0.06 \pm 0.07$} &
  \multicolumn{1}{c|}{$-5.35 \pm 1.43$} &
  \multicolumn{1}{c|}{$-5.54 \pm 1.46$} &
  \multicolumn{1}{c|}{$-5.67 \pm 1.49$} &
  \multicolumn{1}{c|}{$-16.15 \pm 2.29$} &
  \multicolumn{1}{c|}{$16.76 \pm 2.39$} &
  $-17.16 \pm 2.56$ \\ \hline
\multicolumn{1}{c|}{NMF} &
  \multicolumn{1}{c|}{$0.21 \pm 0.08$} &
  \multicolumn{1}{c|}{$0.32 \pm 0.11$} &
  \multicolumn{1}{c|}{$-2.16\pm 1.15$} &
  \multicolumn{1}{c|}{$-2.56\pm 1.34$} &
  \multicolumn{1}{c|}{$-2.76 \pm 1.43$} &
  \multicolumn{1}{c|}{$-14.18 \pm 2.32$} &
  \multicolumn{1}{c|}{$-17.05 \pm 2.61$} &
  $-18.50 \pm 2.85$ \\ \hline
  \multicolumn{1}{c|}{SeaNMF}&
  \multicolumn{1}{c|}{$0.01 \pm 0.02 $} &
  \multicolumn{1}{c|}{$0.02 \pm 0.05 $} &
  \multicolumn{1}{c|}{$0.38 \pm 0.35 $} &
  \multicolumn{1}{c|}{$0.36 \pm 0.36 $} &
  \multicolumn{1}{c|}{$0.35 \pm 0.42$} &
  \multicolumn{1}{c|}{$-8.18 \pm 0.85$} &
  \multicolumn{1}{c|}{$-8.23 \pm 0.86$} &
   $-8.27 \pm 0.87$ \\ \hline
\multicolumn{1}{c|}{BATS} &
  \multicolumn{1}{c|}{$\mathbf{0.00 \pm 0.00}$} &
  \multicolumn{1}{c|}{$\mathbf{0.00 \pm 0.00}$} &
  \multicolumn{1}{c|}{$\mathbf{0.58 \pm 0.45}$} &
  \multicolumn{1}{c|}{$\mathbf{0.58 \pm 0.45}$} &
  \multicolumn{1}{c|}{$\mathbf{0.58 \pm 0.45}$} &
  \multicolumn{1}{c|}{$\mathbf{-7.69 \pm 1.63}$} &
  \multicolumn{1}{c|}{$\mathbf{-7.69 \pm 1.63}$} &
  $\mathbf{-7.69 \pm 1.63}$ \\ \hline
\multicolumn{9}{c}{\qq{Wiki Dataset}} \\ \hline
\multicolumn{1}{c|}{} &
  \multicolumn{1}{c|}{\qq{Jacc}} &
  \multicolumn{1}{c|}{\qq{Dice}} &
  \multicolumn{1}{c|}{\qq{PMI}} &
  \multicolumn{1}{c|}{\qq{$PMI^{Jacc}$}} &
  \multicolumn{1}{c|}{\qq{$PMI^{Dice}$}} &
  \multicolumn{1}{c|}{\qq{UMass}} &
  \multicolumn{1}{c|}{\qq{$UMass^{Jacc}$}} &
  \qq{$UMass^{Dice}$} \\ \hline
\multicolumn{1}{c|}{\qq{LDA}} &
  \multicolumn{1}{c|}{$\mathbf{0.00 \pm 0.00}$} &
  \multicolumn{1}{c|}{$\mathbf{0.00 \pm 0.00}$} &
  \multicolumn{1}{c|}{$-3.55 \pm 1.12$} &
  \multicolumn{1}{c|}{$-3.55 \pm 1.12$} &
  \multicolumn{1}{c|}{$-3.55\pm 1.12$} &
  \multicolumn{1}{c|}{$-19.14 \pm 2.20$} &
  \multicolumn{1}{c|}{$-19.14 \pm 2.20$} &
  $-19.14 \pm 2.20$ \\ \hline
\multicolumn{1}{c|}{\qq{HDP}} &
  \multicolumn{1}{c|}{$0.027 \pm 0.02$} &
  \multicolumn{1}{c|}{$0.05 \pm 0.04$} &
  \multicolumn{1}{c|}{$-5.98 \pm 1.06$} &
  \multicolumn{1}{c|}{$-6.13 \pm 1.03$} &
  \multicolumn{1}{c|}{$-6.25 \pm 1.01$} &
  \multicolumn{1}{c|}{$-22.18 \pm 1.04$} &
  \multicolumn{1}{c|}{$-22.77 \pm 0.99$} &
  $-23.28 \pm 1.12$ \\ \hline
\multicolumn{1}{c|}{\qq{LSA}} &
  \multicolumn{1}{c|}{$0.19 \pm 0.10$} &
  \multicolumn{1}{c|}{$0.29 \pm 0.13$} &
  \multicolumn{1}{c|}{$-0.05 \pm 1.28$} &
  \multicolumn{1}{c|}{$-0.30 \pm 1.21$} &
  \multicolumn{1}{c|}{$-0.43 \pm 1.21$} &
  \multicolumn{1}{c|}{$-8.11 \pm 2.03$} &
  \multicolumn{1}{c|}{$-9.88 \pm 2.77$} &
  $-10.77 \pm 3.11$ \\ \hline
\multicolumn{1}{c|}{\qq{pLSA}} &
  \multicolumn{1}{c|}{$0.12 \pm 0.09$} &
  \multicolumn{1}{c|}{$0.17 \pm 0.12$} &
  \multicolumn{1}{c|}{$-1.97 \pm 1.63$} &
  \multicolumn{1}{c|}{$-2.30 \pm 1.77$} &
  \multicolumn{1}{c|}{$-2.46 \pm 1.87$} &
  \multicolumn{1}{c|}{$-13.44 \pm 3.57$} &
  \multicolumn{1}{c|}{$-15.05 \pm 4.29$} &
  $-15.85 \pm 4.73$ \\ \hline
\multicolumn{1}{c|}{\qq{NMF}} &
  \multicolumn{1}{c|}{$0.14 \pm 0.09$} &
  \multicolumn{1}{c|}{$0.21 \pm 0.14$} &
  \multicolumn{1}{c|}{$-2.60\pm 1.64$} &
  \multicolumn{1}{c|}{$-3.04\pm 1.08$} &
  \multicolumn{1}{c|}{$-3.28 \pm 1.91$} &
  \multicolumn{1}{c|}{$-16.81 \pm 2.85$} &
  \multicolumn{1}{c|}{$-19.63 \pm 3.15$} &
  $-21.22 \pm 3.49$ \\ \hline
  \multicolumn{1}{c|}{\qq{SeaNMF}}&
  \multicolumn{1}{c|}{$0.01 \pm 0.03 $} &
  \multicolumn{1}{c|}{$0.01 \pm 0.06 $} &
  \multicolumn{1}{c|}{$0.37 \pm 0.28 $} &
  \multicolumn{1}{c|}{$0.36 \pm 0.26 $} &
  \multicolumn{1}{c|}{$0.35 \pm 0.22$} &
  \multicolumn{1}{c|}{$-7.02 \pm 0.85$} &
  \multicolumn{1}{c|}{$-7.13 \pm 0.92$} &
   $-8.27 \pm 0.92$ \\ \hline
\multicolumn{1}{c|}{\qq{BATS}} &
  \multicolumn{1}{c|}{$\mathbf{0.00 \pm 0.00}$} &
  \multicolumn{1}{c|}{$\mathbf{0.00 \pm 0.00}$} &
  \multicolumn{1}{c|}{$\mathbf{0.44 \pm 0.24}$} &
  \multicolumn{1}{c|}{$\mathbf{0.44 \pm 0.24}$} &
  \multicolumn{1}{c|}{$\mathbf{0.44 \pm 0.24}$} &
  \multicolumn{1}{c|}{$\mathbf{-6.08 \pm  0.63}$} &
  \multicolumn{1}{c|}{$\mathbf{-6.08 \pm 0.63}$} &
  $\mathbf{-6.08 \pm 0.63}$ \\ \hline
  \multicolumn{9}{c}{\qq{News Dataset}} \\ \hline
\multicolumn{1}{c|}{} &
  \multicolumn{1}{c|}{\qq{Jacc}} &
  \multicolumn{1}{c|}{\qq{Dice}} &
  \multicolumn{1}{c|}{\qq{PMI}} &
  \multicolumn{1}{c|}{\qq{$PMI^{Jacc}$}} &
  \multicolumn{1}{c|}{\qq{$PMI^{Dice}$}} &
  \multicolumn{1}{c|}{\qq{UMass}} &
  \multicolumn{1}{c|}{\qq{$UMass^{Jacc}$}} &
  \qq{$UMass^{Dice}$} \\ \hline
\multicolumn{1}{c|}{\qq{LDA}} &
  \multicolumn{1}{c|}{$\mathbf{0.00 \pm 0.00}$} &
  \multicolumn{1}{c|}{$\mathbf{0.00 \pm 0.00}$} &
  \multicolumn{1}{c|}{$-2.88 \pm 1.32$} &
  \multicolumn{1}{c|}{$-2.88 \pm 1.32$} &
  \multicolumn{1}{c|}{$-2.88 \pm 1.32$} &
  \multicolumn{1}{c|}{$-18.77 \pm 2.14$} &
  \multicolumn{1}{c|}{$-18.77 \pm 2.14$} &
  $-18.77 \pm 2.14$ \\ \hline
\multicolumn{1}{c|}{\qq{HDP}} &
  \multicolumn{1}{c|}{$0.03 \pm 0.03$} &
  \multicolumn{1}{c|}{$0.07 \pm 0.06$} &
  \multicolumn{1}{c|}{$-5.25 \pm 1.48$} &
  \multicolumn{1}{c|}{$-5.42 \pm 1.46$} &
  \multicolumn{1}{c|}{$-5..56 \pm 1.46$} &
  \multicolumn{1}{c|}{$-21.71 \pm 1.36$} &
  \multicolumn{1}{c|}{$-22.52 \pm 1.25$} &
  $-23.19 \pm 1.42$ \\ \hline
\multicolumn{1}{c|}{\qq{LSA}} &
  \multicolumn{1}{c|}{$0.15 \pm 0.09$} &
  \multicolumn{1}{c|}{$0.24 \pm 0.12$} &
  \multicolumn{1}{c|}{$0.68 \pm 1.46$} &
  \multicolumn{1}{c|}{$0.46 \pm 1.45$} &
  \multicolumn{1}{c|}{$0.32 \pm 1.45$} &
  \multicolumn{1}{c|}{$-8.76 \pm 3.77$} &
  \multicolumn{1}{c|}{$-9.08 \pm 3.76$} &
  $-9.63 \pm 3.75 $ \\ \hline
\multicolumn{1}{c|}{\qq{pLSA}} &
  \multicolumn{1}{c|}{$0.11 \pm 0.09$} &
  \multicolumn{1}{c|}{$0.16 \pm 0.12$} &
  \multicolumn{1}{c|}{$-1.46 \pm 2.04 $} &
  \multicolumn{1}{c|}{$-1.78 \pm 2.15$} &
  \multicolumn{1}{c|}{$-1.92  \pm 2.24$} &
  \multicolumn{1}{c|}{$-12.37 \pm 4.05 $} &
  \multicolumn{1}{c|}{$-13.82 \pm 4.86$} &
  $-14.54 \pm 5.29$ \\ \hline
\multicolumn{1}{c|}{\qq{NMF}} &
  \multicolumn{1}{c|}{$ 0.17 \pm 0.08$} &
  \multicolumn{1}{c|}{$0.27 \pm 0.11$} &
  \multicolumn{1}{c|}{$-2.03 \pm 1.52$} &
  \multicolumn{1}{c|}{$-2.40 \pm 1.63$} &
  \multicolumn{1}{c|}{$-2.60 \pm 1.70 $} &
  \multicolumn{1}{c|}{$-17.21 \pm 2.84$} &
  \multicolumn{1}{c|}{$-20.14 \pm 3.20$} &
  $-21.78 \pm 3.51$ \\ \hline
  \multicolumn{1}{c|}{\qq{SeaNMF}}&
  \multicolumn{1}{c|}{$0.03 \pm 0.06 $} &
  \multicolumn{1}{c|}{$0.04 \pm 0.09 $} &
  \multicolumn{1}{c|}{$0.69 \pm 1.81 $} &
  \multicolumn{1}{c|}{$0.65 \pm 1.80 $} &
  \multicolumn{1}{c|}{$0.64 \pm 1.80 $} &
  \multicolumn{1}{c|}{$-8.92 \pm 2.92 $} &
  \multicolumn{1}{c|}{$-9.12 \pm 2.97$} &
   $-9.20 \pm 2.98$ \\ \hline
\multicolumn{1}{c|}{\qq{BATS}} &
  \multicolumn{1}{c|}{$\mathbf{0.00 \pm 0.00}$} &
  \multicolumn{1}{c|}{$\mathbf{0.00 \pm 0.00}$} &
  \multicolumn{1}{c|}{$\mathbf{0.76 \pm 1.49}$} &
  \multicolumn{1}{c|}{$\mathbf{0.76 \pm 1.49}$} &
  \multicolumn{1}{c|}{$\mathbf{0.76 \pm 1.49}$} &
  \multicolumn{1}{c|}{$\mathbf{-7.56 \pm  2.73}$} &
  \multicolumn{1}{c|}{$\mathbf{-7.56 \pm  2.73}$} &
  $\mathbf{-7.56 \pm  2.73}$ \\ \hline
\end{tabular}}
\label{tab14}
\end{center}
\end{table*}
\egroup

Overall, we see that compared with the baselines, \textit{our method BATS obtains competitive topic coherence scores, the lowest similarity scores, and the best composite scores in most cases}. For the Introductions, \qq{Wiki, and News} datasets, \qq{which are the three largest we consider,} our method maintains higher performance than all baselines in all metrics. On the Textbook and Lectures datasets, BATS achieves the best performance on the PMI composite metrics and is a close second to LSA on the non-composite PMI. BATS performs second only to SeaNMF for the UMass composite metrics and comparably to the best baselines on the non-composite UMass metric. The baseline which tends to outperform our algorithm in terms of topic coherence, LSA, also performs the worst in terms of topic diversity. To interpret this diversity performance, we note that a Jacc score of $0.25$ and a Dice score of $0.4$ correspond roughly to $|T_i \cap T_j| \propto 0.4$ in \eqref{eq_jc}, i.e., a 40\% duplication between topics. Thus, LSA (as well as NMF) usually has up to 40\% average overlap in topic words, leading to confusing topics, while our method yields no noticeable overlap. On the other hand, the baseline which matches our algorithm in topic diversity, LDA, is among the lowest performing in terms of coherence, which is also reflected in the composite metrics.

Although SeaNMF achieves both a high topic diversity score and UMass score, on the PMI metric it falls substantially short of BATS (for all datasets) and LSA (for the Textbook and Lectures datasets). This may be due to the fact that, for our purposes, the UMass score for each document is focused only on topics in that document whereas the PMI score includes information from the rest of the dataset. We can thus conclude that, among the algorithms tested, \textit{our algorithm finds the best balance between topic coherence and diversity} for single document topic modeling; its consistent performance across the datasets also shows that it is robust to variations in dataset properties.

We also observe an interesting pattern in the baselines: the spectral methods -- LSA and NMF -- perform high in coherence but low in similarity, while the generative models -- LDA and pLSA -- have the opposite trends. While spectral approaches can extract topics that are interpretable when taken individually, there is high similarity between them because they are based on matrix decomposition and do not consider diversity. Generative models can extract diversified topics, but when they are operating on single documents with few word co-occurrences, the resulting topics will not be as coherent. These observations are consistent with Figure \ref{fig_word}.

\vspace{-2mm}
\subsection{Text Segmentation}
\label{ssec:res-seg}
\vspace{-.6mm}

\subsubsection{Baselines}
We compared BATS against six baselines for the text segmentation task:

\noindent \textit{(i) TextTiling} \cite{hearst1997texttiling}: TextTiling divides the text into pseudosentences, assigns similarity scores at the gaps, detects peak differences in the scores, and marks the peaks as boundaries. The boundaries are normalized to the closest sentence breaks. We use the implementation from the {\tt nltk} package.

\noindent \textit{(ii) C99} \cite{choi2000advances}: C99 is another popular text segmentation algorithm that inserts boundaries based on average inter-sentence similarities. More specifically, a ranking transformation is performed, pairwise cosine distances between sentences are computed based on the ranking, and boundaries are determined based on these similarities. We implement C99 de-novo in Python.


\noindent \textit{(iii) Modified DP Algorithm with LDA (LDA\_MDP)} \cite{misra2011text}: LDA\_MDP performs text segmentation based the LDA topic model, with the segmentation being implemented with dynamic processing (DP) techniques. The method has also been tested using an alternate topic model, multinomial mixture, but LDA has has been found to obtain better performance. We implement the LDA\_MDP model de-novo in Python.

\noindent \textit{(iv) TopicTiling} \cite{riedl2012topictiling}: TopicTiling is based on TextTiling, with additionally integrated topic information from the LDA topic model for text segmentation. We implement TopicTiling de-novo in Python, using a window size of 2 and 500 iterations.

\noindent \textit{(v) SupervisedSeg}~\cite{koshorek2018text}: SupervisedSeg formulates text segmentation as a supervised learning task, training a hierarchical bidirectional neural LSTM model on the WIKI-727K dataset. The authors transform the text into word embeddings using the GoogleNews word2vec pre-trained model, and use the word embeddings as inputs to the neural model. SupervisedSeg then predicts a cut-off probability for each sentence. We used the open source pre-trained model available from~\cite{koshorek2018text}. 

\noindent \textit{(vi) BERTSeg}: To test the efficacy of pre-trained models, we designed BERTSeg, an algorithm that leverages the state-of-the-art Bidirectional Encoder Representations from Transformers (BERT) \cite{devlin2018bert} contextualized word embeddings for text segmentation. As the name suggests, BERT uses the Transformer deep learning architecture \cite{vaswani2017attention} to learn representations of words from unlabeled text, considering both the right context and left context in every layer (i.e., learning bidirectionally). BERTSeg is based on the open source Pytorch BERT model~\cite{reimers2019sentence}. After using the pre-trained BERT model to generate sentence embeddings, BERTSeg employs the agglomerative segment clustering method from BATS to convert the sentence embeddings into segments.

To say consistent across the algorithms, for the topic-based text segmentation methods -- LDA\_MDP and TopicTiling -- we train the topic model with the single document being segmented.

\begin{figure}[t!]
\centering
\subfloat [Average WD score on the \\ Introduction dataset.]{
  \includegraphics[width=.30\linewidth]{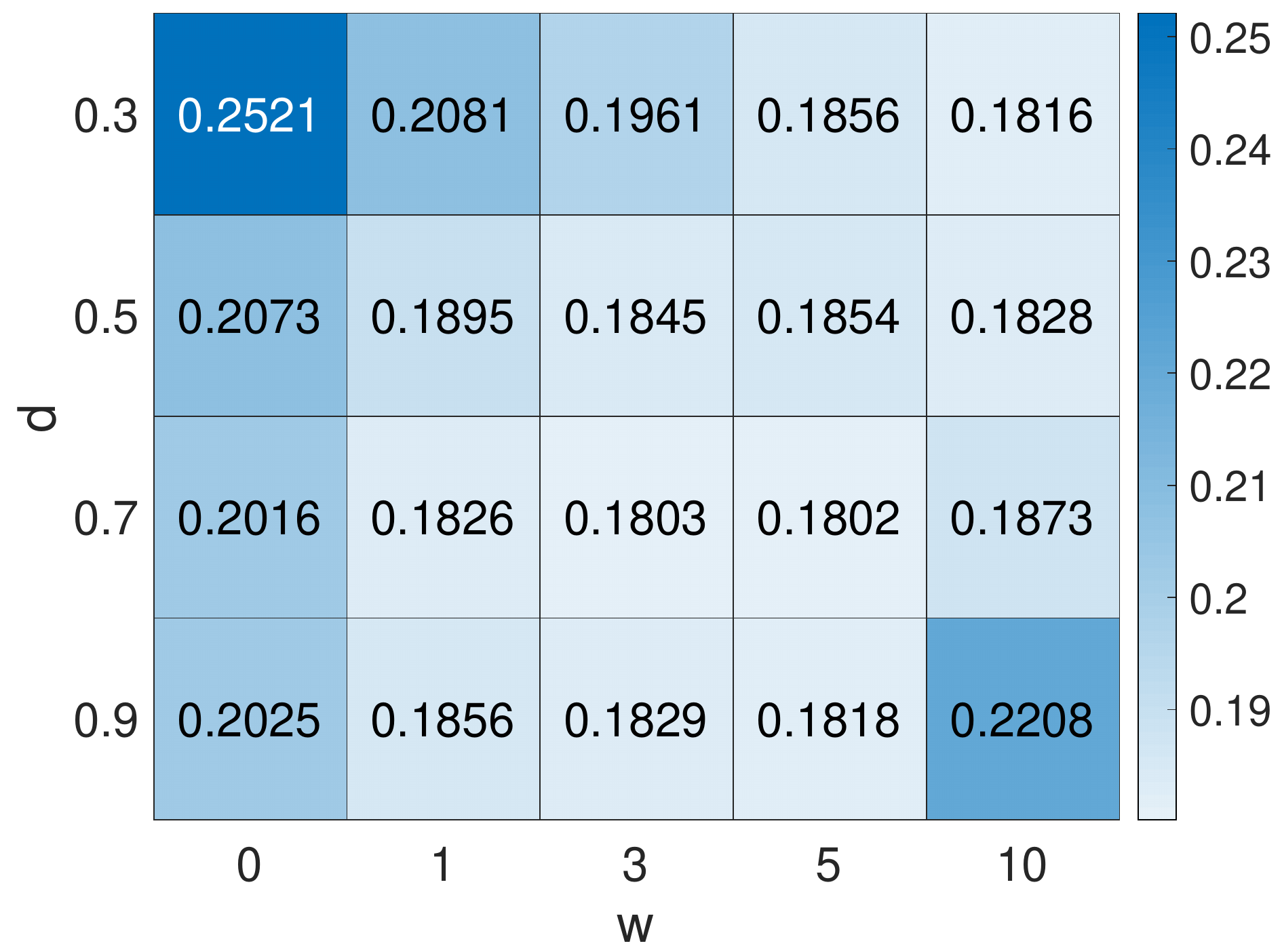}
  \label{fig:pmi_introduction}
}
\subfloat [Average WD score on the \\ Lectures dataset.]{
  \includegraphics[width=.31\linewidth]{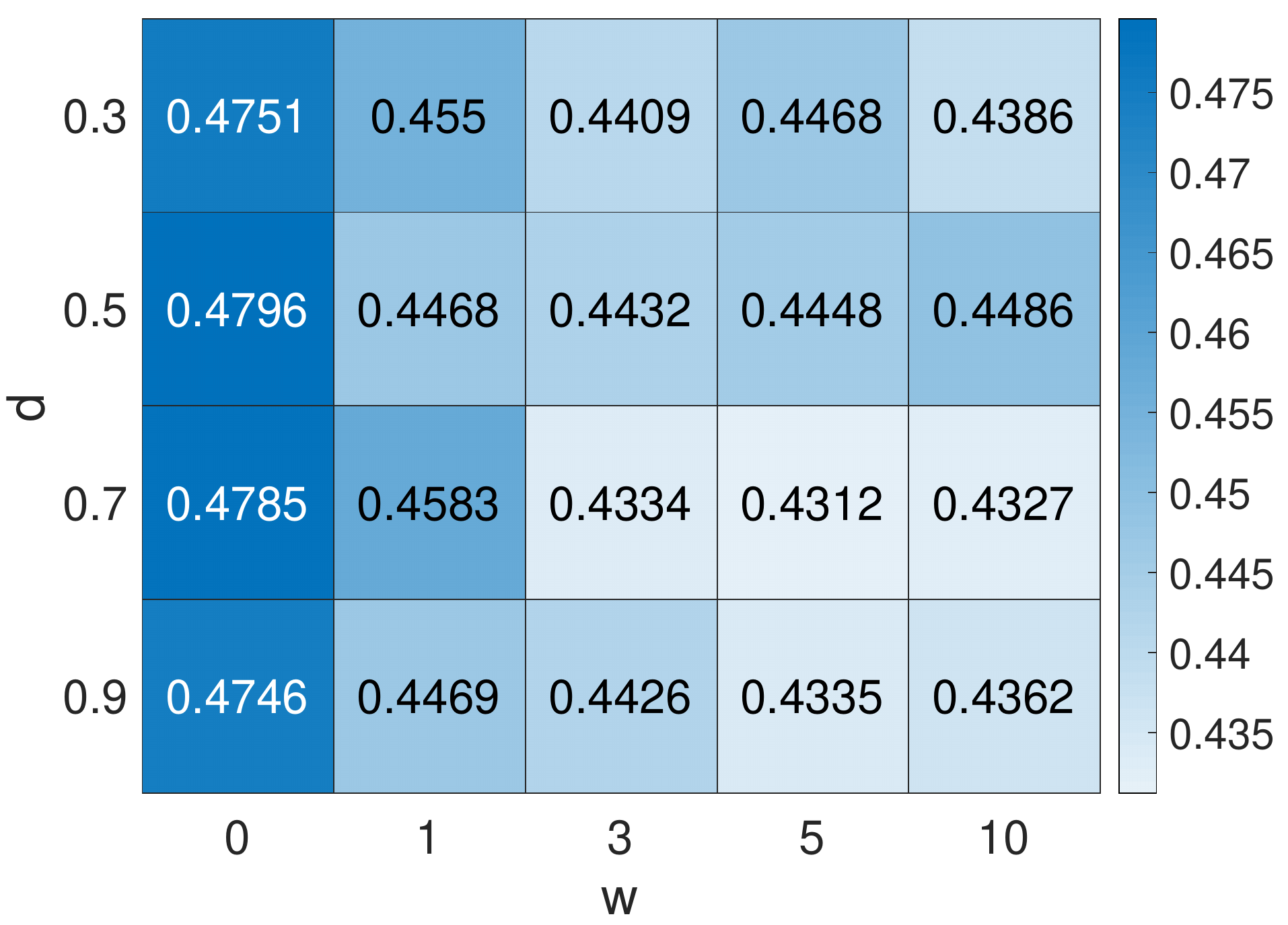}
  \label{fig:pmi_lectures}
}
\subfloat [Average WD score on the \\ Textbook dataset.]{
  \includegraphics[width=.31\linewidth]{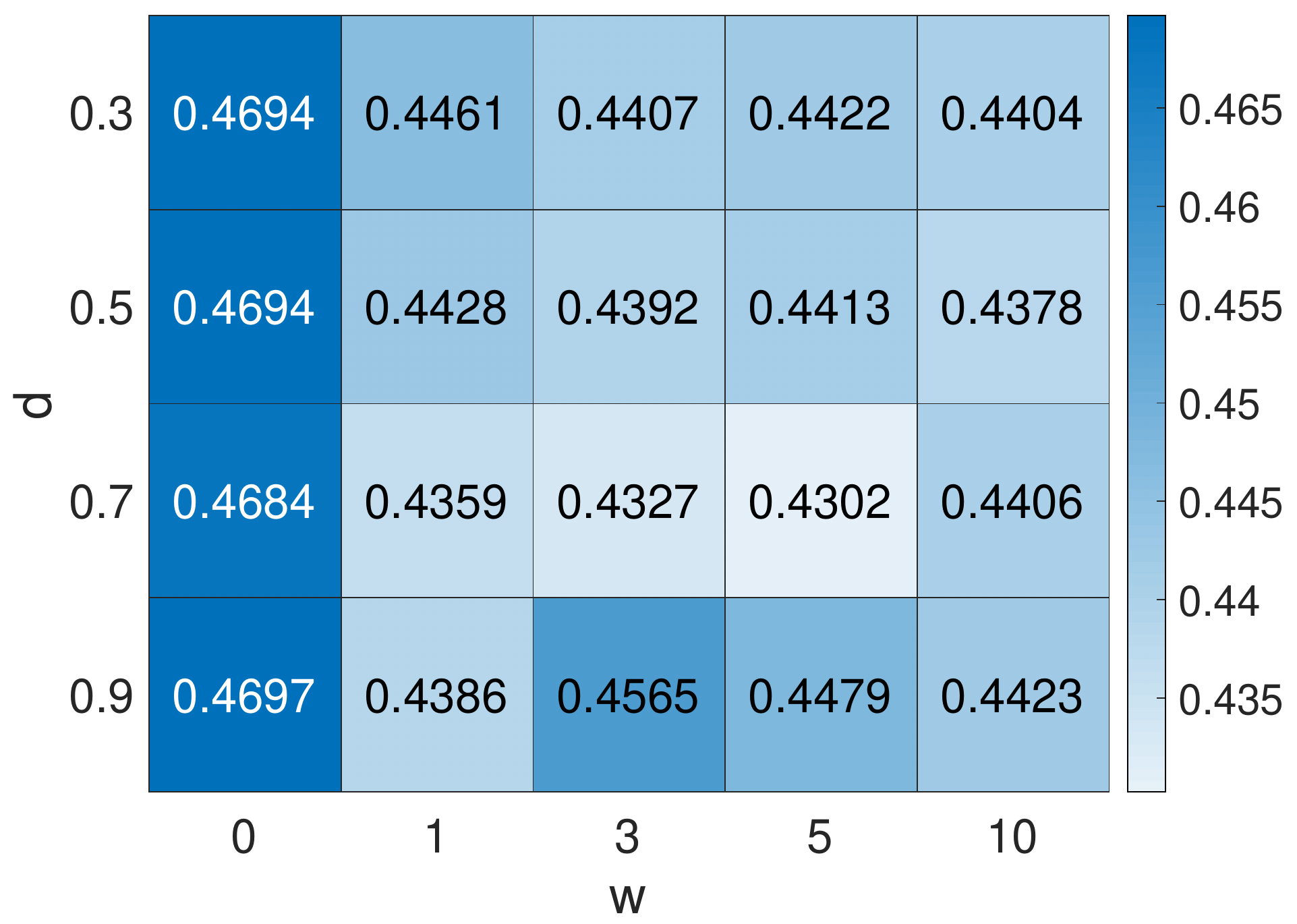}
  \label{fig:pmi_clinical}
}%
\vspace{-4mm}
  \caption{\small{Heatmaps for average WD scores across the the documents in Introduction, Lectures and Textbook dataset with varying $w$ and $d$. We find that $w = 5, d = 0.7$ leads to the best performance, consistent with the PMI results in Figure \ref{fig:wd_PMI}.}}
  \label{fig:wd_WD}
\end{figure}

\vspace{-2mm}
\subsubsection{Evaluation metrics}
We consider two standard text segmentation metrics, $P_k$ \cite{beeferman1999statistical} and WindowDiff (WD) \cite{pevzner2002critique}. Lower values indicate better performance. Each of these metrics compares the ground truth (i.e., reference) segmentation {\tt ref} to the estimated (i.e., hypothesized) segmentation {\tt hyp}. The $P_k$ metric calculates the number of disagreements in the positions of segment boundaries between {\tt hyp} and {\tt ref}; in doing so, it ignores the exact number of boundaries to be detected, and weights false positives more heavily \cite{pevzner2002critique}. WD, on the other hand, slides a fixed-sized window across the document, calculates the number of boundaries within that window, and records an error if {\tt ref} and {\tt hyp} disagree on the number.

Formally, let $(x_1, x_2, ..., x_N)$ be the sequence of $N$ words comprising a document, where each $x_i \in \mathcal{W}$, the set of document words. With $\delta_z(i,j)$ as the binary indicator of whether words $x_i$ and $x_j$ are in the same segment under segmentation $z$, and $b_z(i,j)$ as the number of segment boundaries between $x_i$ and $x_j$ under $z$, the metrics are calculated as
\vspace{-2mm}
\begin{equation} \label{eq_pk}
    P_k = \frac{1}{N-\ell} \sum_{i=1}^{N-\ell} \mathbbm{1} \{ |\delta_{{\tt hyp}}(i, i+\ell) - \delta_{{\tt ref}}(i, i+\ell)| > 0 \},
\end{equation}
\begin{equation} \label{eq_wd}
    \mbox{WD} = \frac{1}{N-\ell} \sum_{i=1}^{N-\ell} \mathbbm{1} \{ |b_{{\tt hyp}}(i, i+\ell) - b_{{\tt ref}}(i, i+\ell)| > 0 \},
\end{equation}
where the window size $\ell$ is set to one less than half the average segment length, and $\mathbbm{1}$ is the indicator function which evaluates to $1$ if the argument is true and $0$ otherwise.

\vspace{-2mm}
\subsubsection{Hyperparameter analysis}
We now test $w$ and $d$ against the WD metric to validate our choices on the topic modeling task. Looking at Table \ref{fig:wd_WD}, we can see that our results are consistent with those in Section \ref{sssec:modeling_hyperparameter}: the optimal values are $d \approx 0.7$, $w = 3$ or $5$ for each of the datasets. Taken together, all of these results support the idea that we can use this set of parameters as default, across different datasets and evaluation metrics. This allows users to use BATS ``out of the box'' without fine-tuning which may be difficult due to limited data in the ``new and single document'' setting or computationally burdensome when data is available.

\vspace{-2mm}
\subsubsection{Results and discussion}
The results obtained by each algorithm on each of the four datasets are given in Table \ref{table_seg}. The mean and standard deviation across documents is shown in each case.

\bgroup
\def\arraystretch{1.2}
\begin{table*}[t]
\small
\centering
  \captionsetup{font=small}
    \caption{Text segmentation evaluation metrics obtained on \qq{six} datasets. Lower scores are better for both metrics. Our method outperforms the baselines in most cases (particularly for the Lectures dataset), except for BERTSeg and C99 on the Choi dataset \qq{and C99 on the Wiki dataset.}}
\vspace{-3mm}
\begin{threeparttable}[b]
\resizebox{1.0\linewidth}{!}{
\begin{tabular}{c||c|c||c|c||c|c||c|c}
\hline
\hline
\multirow{2}{*}{} & \multicolumn{2}{c||}{Textbook Dataset} & \multicolumn{2}{c||}{Lectures Dataset}  & \multicolumn{2}{c||}{Introductions Dataset} & \multicolumn{2}{c}{Choi Dataset}       \\ \cline{2-9}
& $P_k$ & WD & $P_k$ & WD & $P_k$ & WD & $P_k$ & WD \\
\hline
TextTiling & $0.45 \pm 0.177$ & $0.47 \pm 0.169$ & $0.43 \pm 0.099$ & $0.48 \pm 0.088$ & $0.29 \pm 0.158$ & $0.33 \pm 0.170$ & $0.33 \pm 0.076$ & $0.34 \pm 0.075$ \\
\hline
C99 & $0.55 \pm 0.142$ & $0.79 \pm 0.206$ & $0.51 \pm 0.075$ & $0.85 \pm 0.164$ & $0.20 \pm 0.120$ & $0.29 \pm 0.184$ & $0.14 \pm 0.077$ & $0.15 \pm 0.081$ \\
\hline
LDA\_MDP & $0.52 \pm 0.161$ & $0.60 \pm 0.121$ & $0.53 \pm 0.117$ & $0.63 \pm 0.118$ & $0.51 \pm 0.142$ & $0.59 \pm 0.136$ & $0.49 \pm 0.079$ & $0.50 \pm 0.088$ \\
\hline
TopicTiling & $0.50 \pm 0.157$ & $0.56 \pm 0.140$ & $0.51 \pm 0.110$ & $0.56 \pm 0.100$ & $0.50 \pm 0.138$ & $0.57 \pm 0.122$ & $0.45 \pm 0.079$ & $0.47 \pm 0.082$ \\
\hline
SupervisedSeg & $0.45 \pm 0.157$ & $0.58 \pm 0.180$ & $0.42 \pm 0.096$ & $\mathbf{0.43 \pm 0.083}$ & $0.45 \pm 0.127$ & $0.57 \pm 0.133$ & $0.23 \pm 0.064$ & $0.24 \pm 0.065$ \\
\hline
BERTSeg & {\boldmath$0.41 \pm 0.193$} & $0.44 \pm 0.181$ & $0.43 \pm 0.148$ & $0.47 \pm 0.158$  & $0.17 \pm 0.149$ & $0.19 \pm 0.163$ & $\mathbf{0.10 \pm 0.074}$ & $\mathbf{0.11 \pm 0.076}$ \\
\hline
BATS & {$\mathbf{0.41  \pm 0.162}$} & {\boldmath$0.43 \pm 0.164$} & {\boldmath$0.38 \pm 0.112$} & {\boldmath$0.43 \pm 0.112$} & {\boldmath$0.16 \pm 0.136$} & {\boldmath$0.18 \pm 0.143$} & $0.22 \pm 0.103$ & $0.23 \pm 0.109$ \\
\hline
\hline
\multirow{2}{*}{} & \multicolumn{2}{c||}{\qq{Wiki Dataset}} & \multicolumn{2}{c||}{\qq{News Dataset}}     \\ \cline{1-5} 
\cline{1-5}
& \qq{$P_k$} & \qq{WD} & \qq{$P_k$} & \qq{WD}\\
\cline{1-5}
\qq{TextTiling}  & $0.39 \pm 0.123$  & $0.41 \pm 0.122$  & $0.42 \pm 0.154$  & $0.44 \pm 0.153$ & \multicolumn{4}{c}{}\\
\cline{1-5}
\qq{C99} & $\mathbf{0.37 \pm 0.126}$ & $\mathbf{0.38 \pm 0.124}$ & $0.39 \pm 0.157$ & $0.40 \pm 0.158$  & \multicolumn{4}{c}{}\\
\cline{1-5}
\qq{LDA\_MDP} & $0.62 \pm 0.124$ & $0.68 \pm 0.136$ & $0.64 \pm 0.142$ & $0.69 \pm 0.130 $& \multicolumn{4}{c}{}\\
\cline{1-5}
\qq{TopicTiling} & $0.46 \pm 0.141 $& $0.48 \pm 0.146$ & $0.43 \pm 0.154$ & $0.45 \pm 0.153$& \multicolumn{4}{c}{}\\
\cline{1-5}
\qq{SupervisedSeg} & $0.42 \pm 0.145$ & $0.44 \pm 0.147$ & $0.41 \pm 0.151$ & $0.43 \pm 0.154$& \multicolumn{4}{c}{}\\
\cline{1-5}
\qq{BERTSeg} & $0.40 \pm 0.149$ & $0.41 \pm 0.152$ & $0.36 \pm 0.113$ & $0.38 \pm 0.114$& \multicolumn{4}{c}{}\\
\cline{1-5}
\qq{BATS}  & $0.39 \pm 0.148$ & $0.41 \pm 0.145$ & $\mathbf{0.34 \pm 0.155}$ & $\mathbf{0.36 \pm 0.157}$& \multicolumn{4}{c}{}\\
\cline{1-5}
\end{tabular}
}
\vspace{-2mm}
\end{threeparttable}
\label{table_seg}
\end{table*}
\egroup

Overall, we see that \textit{our method BATS consistently outperforms all of the baselines in terms of text segmentation on \qq{four of out of six} datasets}. BERTSeg is the most competitive on the baseline on all of the datasets except for Lectures, where SupervisedSeg does slightly better\qq{, and Wiki where C99 performs better}. The three topic-based text segmentation methods (LDA\_MDP, and TopicTiling) actually perform considerably worse than these other baselines, possibly due to single documents containing insufficient data for training their topic models (recall in particular that LDA had poor topic diversity performance in Table \ref{table_seg}). Although for the Textbook and Introductions datasets BATS achieves only modest gains over BERTSeg, this achievement is noteworthy as BERT is very large (around 335 million parameters) and, even pre-trained, has a runtime two orders of magnitude worse than BATS (see Section \ref{ssec:scala}). The performance gains are more pronounced in the Lectures \qq{and News datasets} where BATS improves more substantially on BERTSeg.

The pre-trained models outperforming the simpler baselines is perhaps not surprising. However, the fact that they achieve comparable but slightly worse results than BATS across three natural datasets suggests that (i) given their computational complexity, pre-trained embedding models are not as well suited for the ``new and single document'' setting as BATS, and (ii) the relatively simple and efficient pre-processing and biclustering techniques of BATS in combination are effective. Considering these results along with those in in Sec. \ref{ssec:res-topic}, we conclude that \textit{our method is capable of identifying accurate segment boundaries and topic words for a single document simultaneously}.


The C99 baseline performs remarkably well on the Choi \qq{and Wiki datasets}. C99 is designed specifically with datasets such as Choi in mind, where documents are artificially built with identical numbers of short segments and sparse content in each segment \cite{choi2000advances}. Specifically, as shown in Table \ref{tabD}, the average sentences per document in Choi are significantly smaller than the other datasets. This is due to the way it is constructed -- with each document as combinations of first $\ell$ sentences from documents in another corpus -- making it less realistic than the other datasets. Similarly, because BERTSeg uses agglomerative clustering on pre-trained embeddings directly (instead of using embeddings as input to another supervised model), it may be particularly apt for identifying breaks among unrelated topics (i.e., topics from different articles) as opposed to the harder task of segmenting natural text with related topics. Put simply, using the embeddings directly makes a ``coarse'' separation between unrelated topics easier as word embeddings are designed to capture these semantic differences. By contrast, BATS is designed to segment natural text with a ``finer'' separation between topics that are related but distinct in the ``single and new document'' setting. In this setting, even contextualized embeddings are not as effective as the semantic differences between topics is likely to be relatively small and so BATS outperforms embeddings-based models, as evidenced by results on the other datasets. Similar results have been observed in, e.g., \cite{koshorek2018text}.

\vspace{-2mm}
\subsection{Scalability Analysis}
\label{ssec:scala}
\vspace{-.6mm}
Next, we evaluate the effect of the number of sentences and segments on the runtime of our method compared with the baselines. Table \ref{tab:absolute_running_time} shows the increase in runtime from varying the number of sentences in each segment for the Choi dataset, relative to the case of 50 sentences (we choose this dataset because all documents are constructed with a constant number of segments). We can see that \textit{the growth in runtime of our methodology BATS is comparable to the most scalable baselines}, with the rate of increase in runtime less than the corresponding increase in sentences. Additionally, BATS is the only methodology performing both topic modeling and text segmentation. Out of the baselines in Table \ref{tab:absolute_running_time}, TextTiling, C99, and pLSA have considerably higher increases in runtime, with pLSA performing the worst. The substantial difference between LSA, the most scalable, and pLSA is consistent with spectral approaches being known to scale better than generative algorithms that require multiple iterations \cite{zhong2005generative}. Although BERTSeg and SupervisedSeg scale relatively well, they have absolute runtimes second and third to pLSA for all sentence lengths. This indicates that there is a high fixed overhead associated with loading the large pre-trained models for segmentation.





\begin{table}[t]
\vspace{-2mm}
\centering
\caption{Absolute running time and increasing rate when varying the number of sentences in the Choi dataset. The time increase is relative to the case of 50 sentences. Our method scales well compared with the baselines.}
\vspace{-3mm}
\label{tab:absolute_running_time}
\begin{adjustbox}{width=1.0\textwidth}
\begin{tabular}{l|l|ll|ll|ll|ll|ll}
\hline
Sentences & \multicolumn{1}{c|}{50 (baseline)} & \multicolumn{2}{c|}{60} & \multicolumn{2}{c|}{80} & \multicolumn{2}{c|}{100} & \multicolumn{2}{c|}{120} & \multicolumn{2}{c}{140} \\ \hline
Runtime              & Abs. & Abs.     & Rate & Abs.     & Rate & Abs.     & Rate & Abs.     & Rate & Abs.      & Rate \\ \hline
LSA           & 5.21     & 6.15     & 1.18 & 7.24     & 1.39 & 8.53     & 1.64 & 9.54     & 1.83 & 10.62     & 2.04 \\
NMF           & 35.63    & 43.72    & 1.23 & 56.81    & 1.59 & 64.65    & 1.81 & 86.71    & 2.43 & 92.79     & 2.60 \\
HDP           & 89.6     & 109.12   & 1.22 & 137.78   & 1.54 & 179.55   & 2.00 & 202.35   & 2.26 & 238.98    & 2.67 \\
TopicTiling   & 55.02    & 67.63    & 1.23 & 87.69    & 1.59 & 109.95   & 2.00 & 125.56   & 2.28 & 148.61    & 2.70 \\
LDA\_MDP      & 931.26   & 1130.78  & 1.21 & 1446.27  & 1.55 & 1843.17  & 1.98 & 2022.65  & 2.17 & 2598.78   & 2.79 \\
LDA           & 837.8    & 1078.98  & 1.29 & 1342.06  & 1.60 & 1710.25  & 2.04 & 2085.36  & 2.49 & 2395.94   & 2.86 \\
TextTiling    & 90.12    & 140.18   & 1.56 & 227.75   & 2.53 & 349.76   & 3.88 & 447.13   & 4.96 & 639.63    & 7.10 \\
C99           & 28.42    & 44.32    & 1.56 & 74.21    & 2.61 & 119.24   & 4.20 & 172.44   & 6.07 & 227.15    & 7.99 \\
pLSA          & 14742.62 & 23388.19 & 1.59 & 39804.17 & 2.70 & 64856.58 & 4.40 & 90536.42 & 6.14 & 125231.36 & 8.49 \\
SupervisedSeg  & 2103.45  & 2615.49  & 1.24 & 2676.45  & 1.27 & 3419.84  & 1.63 & 4350.65  & 2.07 & 4727.54   & 2.25 \\
BERTSeg    & 5361.82  & 7616.14  & 1.42 & 8522.98  & 1.59 & 9663.25  & 1.80 & 12324.16 & 2.30 & 13193.24  & 2.46 \\
SeaNMF        & 79.46    & 85.67    & 1.08 & 114.27   & 1.44 & 140.79   & 1.77 & 152.14   & 1.91 & 206.71    & 2.60 \\
BATS          & 52.49    & 60.29    & 1.15 & 69.71    & 1.33 & 89.84    & 1.71 & 95.75    & 1.82 & 109.62    & 2.09 \\\hline
\end{tabular}
\end{adjustbox}
\end{table}

\begin{table}[ht!]
\vspace{-2mm}
\centering
\caption{\small{The absolute running time and increasing rate for text segments in the Lectures dataset with the varying number of segments. The baseline is 10 segments, and each bar is over 10 runs.}}
\vspace{-3mm}
\begin{adjustbox}{width=0.8\textwidth}
\label{tab:runtime_lectures}
\begin{threeparttable}
\begin{tabular}{l|ll|ll|ll|ll|ll}
\hline
Segments &
  \multicolumn{2}{c|}{10 (baseline)} &
  \multicolumn{2}{c|}{20} &
  \multicolumn{2}{c|}{30} &
  \multicolumn{2}{c}{40} &
  \multicolumn{2}{c}{50} \\ \hline
Runtime &
  \multicolumn{1}{c}{Abs.} &
  \multicolumn{1}{c|}{Rate} &
  \multicolumn{1}{c}{Abs.} &
  \multicolumn{1}{c|}{Rate} &
  \multicolumn{1}{c}{Abs.} &
  \multicolumn{1}{c|}{Rate} &
  \multicolumn{1}{c}{Abs.} &
  \multicolumn{1}{c|}{Rate} &
  \multicolumn{1}{c}{Abs.} &
  \multicolumn{1}{c}{Rate} \\ \hline
HDP &
  1105.63 &
  1 &
  1104.4 &
  1 &
  1104.12 &
  1 &
  1104.41 &
  1 &
  1105.45 &
  1 \\
LDA &
  4272.27 &
  1 &
  4272.86 &
  1 &
  4319.22 &
  1.01 &
  4381.64 &
  1.03 &
  4453.33 &
  1.04 \\
LSA &
  225.94 &
  1 &
  266.43 &
  1.18 &
  289.98 &
  1.28 &
  303.22 &
  1.34 &
  312.37 &
  1.38 \\
NMF &
  491.57 &
  1 &
  570 &
  1.16 &
  647.43 &
  1.32 &
  739.32 &
  1.5 &
  840.84 &
  1.71 \\
SeaNMF  &
  526.38 &
   1 &
  950.87 &
  1.81 &
  1211.23 &
  2.3 &
  1370.74 &
  2.6 &
  2299.02 &
  4.37 \\
  BATS &
  2434.27 &
  1 &
  2448.87 &
  1.01 &
  2469.49 &
  1.01 &
  2493.14 &
  1.02 &
  2517.15 &
  1.03 \\\hline \hline
\multicolumn{1}{c|}{Segments} &
  \multicolumn{2}{c|}{60} &
  \multicolumn{2}{c|}{70} &
  \multicolumn{2}{c|}{80} &
  \multicolumn{2}{c}{90} &
  \multicolumn{2}{c}{100} \\ \hline
Runtime &
  \multicolumn{1}{c}{Abs.} &
  \multicolumn{1}{c|}{Rate} &
  \multicolumn{1}{c}{Abs.} &
  \multicolumn{1}{c|}{Rate} &
  \multicolumn{1}{c}{Abs.} &
  \multicolumn{1}{c|}{Rate} &
  \multicolumn{1}{c}{Abs.} &
  \multicolumn{1}{c|}{Rate} &
  \multicolumn{1}{c}{Abs.} &
  \multicolumn{1}{c}{Rate} \\ \hline
HDP &
  1108.22 &
  1 &
  1111.68 &
  1.01 &
  1115.45 &
  1.01 &
  1119.19 &
  1.01 &
  1122.55 &
  1.02 \\
LDA &
  4518.03 &
  1.06 &
  4576.03 &
  1.07 &
  4629.34 &
  1.08 &
  4678.3 &
  1.1 &
  4722.8 &
  1.11 \\
LSA &
  316.39 &
  1.4 &
  320.22 &
  1.42 &
  323.86 &
  1.43 &
  327.44 &
  1.45 &
  330.99 &
  1.46 \\
NMF &
  888.14 &
  1.81 &
  934.04 &
  1.9 &
  982.02 &
  2 &
  1033.77 &
  2.1 &
  1090.05 &
  2.22 \\
SeaNMF  &
  3167.22 &
  6.02 &
  4235.9 &
  8.05 &
  6324.47 &
  12.02 &
  7827.73 &
  14.87 &
  8692.67 &
  16.51 \\
BATS &
  2540.81 &
  1.04 &
  2564.17 &
  1.05 &
  2587.01 &
  1.06 &
  2609.59 &
  1.07 &
  2632.07 &
  1.08 \\\hline
\end{tabular}
\end{threeparttable}
\end{adjustbox}
\end{table}

Table \ref{tab:runtime_lectures} shows the impact on runtime from varying the number of segments per document for the Lectures dataset (recall from Table \ref{tabD} this dataset has the longest documents available). Here we have excluded pLSA, as its runtime is significantly longer, and also the text segmentation baselines, as their runtimes are not dependent on the number of segments. SeaNMF is by far impacted the most, followed by NMF and LSA, while HDP and LDA exhibit the best scalability. Our method remains impacted under 10\% for a $5$-fold increase in segments, again implying that \textit{our method supports changes in the size of input efficiently}. We note that while for a small number of topics BATS has a longer absolute runtime than SeaNMF, this disadvantage disappears when the number of segments exceeds 50. Taken together, Tables \ref{tab:absolute_running_time} and \ref{tab:runtime_lectures} validate our theoretical analysis in Section \ref{ssec:complexity} which concluded that BATS has low computational complexity.

When we consider these runtime results in the context of the evaluation metrics, we find that \textit{BATS provides the best balance of computational efficiency and performance across evaluation metrics for both the text segmentation and the topic modeling tasks.} When compared with the most competitive topic modeling baselines (LSA and SeaNMF), BATS is not as efficient as LSA but scores much better on topic similarity measures. BATS scales better than SeaNMF and has a significantly lower runtime on longer documents with many segments, as evidenced by its performance on the Lectures dataset (Table \ref{tab:runtime_lectures}). On the text segmentation task, BATS both scales better and has a much lower absolute running time than the two best baselines (BERTSeg and SupervisedSeg) in addition to outperforming both of them on three real-world datasets. These results show that BATS is a practical algorithm with substantial advantages over existing baselines for the ``single and new document'' setting that we are focused on in this paper.

\vspace{-2mm}
\subsection{Ablation Study}
\label{ssec:ablation}
To test the effectiveness of different parts of the BATS methodology, we conduct an ablation study that excludes components and measures the resulting performance effect. The results for both the topic coherence and text segmentation metrics are given in Table \ref{tab:ablation_study}. Referring to Figure \ref{fig:framework}, we focus on the following parts of BATS: (i) parts of speech (POS) awarding (i.e., setting $\lambda = 0$ in (\ref{eq:pos_awarding})), (ii) low frequency processing (i.e., not excluding degree one words), (iii) regularization of the graph Laplacian (i.e., setting the $\tau$ terms to $0$ in (\ref{eq:laplacian})), (iv) sentence bonding (i.e., setting $d = 0$ in (\ref{eq:sentence_bonding}))\qq{, and (v) the Laplacian. For (v), we remove the Laplacian entirely, and instead perform SVD directly on the sentence-word matrix after POS awarding and sentence bonding}.

We draw three main conclusions from these results. First, each component of BATS has a substantial impact on one or more of the metrics, which validates its inclusion in the methodology. Second, the topic coherence metrics are impacted by each component more significantly than the text segmentation metrics, which implies that results at the sentence level are not impacted as significantly by preprocessing than those at the topic level. Third, out of all the components, the \qq{Laplacian has the greatest impact, followed by sentence bonding. This validates our graph Laplacian representation as input to the spectral clustering algorithm and suggests that our connection to denoising under the stochastic block model is appropriate. The impact of removing sentence bonding reflects} the importance of incorporating word order information in the presence of sparsity.
\begin{table}[t]
\vspace{-2mm}
\caption{\small{Ablation studies on the Textbook, Lectures, and Introduction datasets. Each row gives the performance on metrics obtained when excluding a component of the methodology. Overall, we see that each component is important to BATS, as excluding any of them results in lower performance on one or more metrics.}}
\vspace{-4mm}
\small
\centering
\begin{adjustbox}{width=1.0\textwidth}
\begin{tabular}{c||ccll||ccll}
\hline
&\multicolumn{4}{c||}{Textbook Dataset} & \multicolumn{4}{c}{Lectures Dataset}                                                                                                                                                       \\ \hline
\multicolumn{1}{c||}{}                               & \multicolumn{1}{c|}{PMI}             & \multicolumn{1}{c|}{UMass}             & \multicolumn{1}{l|}{$P_k$}               & WD & \multicolumn{1}{c|}{PMI}             & \multicolumn{1}{c|}{UMass}             & \multicolumn{1}{l|}{$P_k$}               & WD                \\ \hline
\multicolumn{1}{l||}{BATS - SPEECH}                  & \multicolumn{1}{c|}{$0.42\pm 0.74$} & \multicolumn{1}{c|}{$ -11.21 \pm 2.44 $}  & \multicolumn{1}{l|}{$0.42 \pm 0.18$}  & $0.44 \pm 0.17$  & \multicolumn{1}{c|}{$0.29 \pm 0.55$} & \multicolumn{1}{c|}{$-9.88 \pm 2.05$}  & \multicolumn{1}{l|}{$0.42 \pm 0.16$}  & $0.44 \pm 0.16$  \\
\multicolumn{1}{l||}{BATS - FREQUENCY}        & \multicolumn{1}{c|}{$0.41 \pm 0.75$} & \multicolumn{1}{c|}{$-11.18 \pm 2.88$} & \multicolumn{1}{l|}{$0.42  \pm 0.17$} & $0.44 \pm 0.16$  & \multicolumn{1}{c|}{$0.21 \pm 0.58$} & \multicolumn{1}{c|}{$-10.79 \pm 2.11$} & \multicolumn{1}{l|}{$0.41 \pm 0.11$}  & $0.45 \pm 0.11$  \\
\multicolumn{1}{l||}{BATS - REGULARIZATION} & \multicolumn{1}{c|}{$-2.25 \pm 0.77$} & \multicolumn{1}{c|}{$-14.81 \pm 1.83$} & \multicolumn{1}{l|}{$0.44  \pm 0.18$} & $0.45 \pm 0.17$  & \multicolumn{1}{c|}{$-2.27\pm 0.97$}  & \multicolumn{1}{c|}{$-14.99 \pm 1.28$} & \multicolumn{1}{l|}{$0.40  \pm 0.12$} & $0.44 \pm 0.12$  \\
\multicolumn{1}{l||}{BATS - BONDING} & \multicolumn{1}{c|}{$-2.38 \pm 0.76$} & \multicolumn{1}{c|}{$-14.69 \pm 1.79$} & \multicolumn{1}{l|}{$0.44  \pm 0.19$} & $0.47 \pm 0.17$  & \multicolumn{1}{c|}{$-2.39 \pm 0.99$} & \multicolumn{1}{c|}{$-14.80 \pm 1.16$} & \multicolumn{1}{l|}{$0.42  \pm 0.13$} & $0.47 \pm 0.12$  \\
\multicolumn{1}{l||}{\qq{BATS - LAPLACIAN}} & \multicolumn{1}{c|}{$-3.46\pm 0.82$} & \multicolumn{1}{c|}{$ -16.48 \pm 2.89 $}  & \multicolumn{1}{l|}{$0.48 \pm 0.18$}  & $0.51 \pm 0.18$ & \multicolumn{1}{c|}{$-3.79\pm 0.54$} & \multicolumn{1}{c|}{$ -16.11 \pm 1.82 $}  & \multicolumn{1}{l|}{$0.51 \pm 0.14$}  & $0.54 \pm 0.15$    \\
\multicolumn{1}{l||}{BATS}                           & \multicolumn{1}{c|}{$\mathbf{0.62 \pm 0.63}$} & \multicolumn{1}{c|}{$\mathbf{-10.28 \pm 2.31}$} & \multicolumn{1}{l|}{$\mathbf{0.41  \pm 0.16}$} & $\mathbf{0.43 \pm 0.16}$  & \multicolumn{1}{c|}{$\mathbf{0.54 \pm 0.53}$} & \multicolumn{1}{c|}{$\mathbf{-9.08 \pm 1.82}$}  & \multicolumn{1}{l|}{$\mathbf{0.38 \pm 0.11}$}  & $\mathbf{0.43 \pm 0.11}$  \\ \hline \hline
&\multicolumn{4}{c||}{Introductions Dataset} & \multicolumn{4}{c}{}                                                                                                                                                        \\ \cline{1-5}
\multicolumn{1}{c||}{}                               & \multicolumn{1}{c|}{PMI}             & \multicolumn{1}{c|}{UMass}             & \multicolumn{1}{l|}{$P_k$}               & WD & \multicolumn{1}{c}{}             & \multicolumn{1}{c}{}             & \multicolumn{1}{l}{}               &                 \\ \cline{1-5}
\multicolumn{1}{l||}{BATS - SPEECH}  & \multicolumn{1}{c|}{$0.17 \pm 0.45$} & \multicolumn{1}{c|}{$-8.18 \pm 1.62$}  & \multicolumn{1}{l|}{$0.163 \pm 0.14$} & $0.185 \pm 0.15$  &  &   &   &  \\
\multicolumn{1}{l||}{BATS - FREQUENCY}         & \multicolumn{1}{c|}{$0.16 \pm 0.47$} & \multicolumn{1}{c|}{$-8.40 \pm 1.71$}  & \multicolumn{1}{l|}{$0.164 \pm 0.15$} & $0.186 \pm 0.16$  & & &   &\\
\multicolumn{1}{l||}{BATS - REGULARIZATION} & \multicolumn{1}{c|}{$0.15\pm 0.49$}  & \multicolumn{1}{c|}{$-12.09 \pm 2.21$}  & \multicolumn{1}{l|}{$0.164  \pm 0.14$} & $0.189 \pm 0.15$  & &  & & \\
\multicolumn{1}{l||}{BATS - BONDING} & \multicolumn{1}{c|}{$-2.25 \pm 0.77$} & \multicolumn{1}{c|}{$-14.81 \pm 1.83$} & \multicolumn{1}{l|}{$0.166  \pm 0.15$} & $0.188 \pm 0.17$  & &  &  &\\
\multicolumn{1}{l||}{\qq{BATS - LAPLACIAN}}  & \multicolumn{1}{c|}{$-2.67\pm 0.73$} & \multicolumn{1}{c|}{$ -15.21 \pm 2.12 $}  & \multicolumn{1}{l|}{$0.376 \pm 0.10$}  & $0.382 \pm 0.11$ & & & & \\
\multicolumn{1}{l||}{BATS}                           & \multicolumn{1}{c|}{$\mathbf{0.58 \pm 0.45}$} & \multicolumn{1}{c|}{$\mathbf{-7.69 \pm 1.63}$}  & \multicolumn{1}{l|}{$\mathbf{0.160 \pm 0.14}$} & $\mathbf{0.180 \pm 0.14}$  &  &   &   & \\
\cline{1-5}
\end{tabular}
\end{adjustbox}
\label{tab:ablation_study}
\end{table}
\vspace{-1mm}
\section{Conclusion and Future Work}
The ``single and new document'' setting arises when it is desirable to perform modeling tasks on a single document, due to the potential novelty of words in the document and/or computational limitations. In this work, we developed an unsupervised, computationally efficient, statistically sound methodology called BATS that simultaneously extracts the topics and segments the text from one single document. BATS first leverages word-order information together with optimization tricks such as parts-of-speech (POS) tagging to refine a document's sentence-word matrix. It then obtains a singular value decomposition from a regularized form of the graph Laplacian, with the singular vectors yielding low dimensional embeddings of words and sentences. Finally, BATS employs clustering algorithms to extract topics and text segments from the left and right singular vectors. Through evaluations against 12 baselines on \qq{six} datasets, we confirmed that our algorithm achieves the best overall results considering runtime, scalability, and standard metrics in both topic modeling and text segmentation tasks for the ``single and new document'' setting. For topic modeling, this was especially true when considering the dual objectives of coherence maximization and similarity minimization across topics. Our experimental results also showed that BATS scales well with the size of the input data, and that it is robust to changes in dataset characteristics.

We identify several potential avenues of future work. First, \qq{BATS could potentially be tuned to improve performance on the topic modeling or text segmentation task at the cost of increased computation. Specifically, although we found that BATS outperformed baselines based on word embedding techniques, further experimentation may leverage embeddings to enhance BATS. As this would likely come at substantial computational cost (see Tables \ref{tab:absolute_running_time} and \ref{tab:runtime_lectures}), we leave this as an avenue for future work beyond the ``single and new document'' setting.} Second, a more elaborate POS awarding scheme in the sentence-word matrix construction phase of BATS may improve topic coherence further. Third, since BATS provides both topic and text segment information, the application of our methodology to text summarization can also be considered, e.g., in identifying the most important segments according to the number of corresponding topic words. 

\newpage

\section{Acknowledgement}
We thank anonymous reviewers for helpful comments and suggestions. Christopher G. Brinton was supported in part by the Charles Koch Foundation. Qiong Wu and Zhenming Liu are supported by NSF grants NSF-2008557, NSF-1835821, and NSF-1755769. Yanhua Li was supported in part by NSF grants IIS-1942680 (CAREER), CNS-1952085, CMMI1831140, and DGE-2021871.
The authors acknowledge William $\&$ Mary Research Computing for providing computational resources and technical support that have contributed to the results reported within this paper.

\bibliographystyle{ACM-Reference-Format}
\bibliography{bib}


\end{document}